\numberwithin{equation}{section}
\newtheorem{theorem}{Theorem}[section]
\newtheorem{proposition}[theorem]{Proposition}
\theoremstyle{definition}
\newtheorem{example}[theorem]{Example}
\newtheorem{remark}[theorem]{Remark}
\newcommand{\Z}{{\mathbb Z}}
\newcommand{\C}{{\mathbb C}}
\newcommand{\ds}{\displaystyle}
\begin{document}

\title[Tetrahedron Equation
and Quantum $R$ Matrices
for Spin Representations]{Tetrahedron Equation
and Quantum $\boldsymbol{R}$ Matrices
for Spin Representations of
$\boldsymbol{B^{(1)}_n}$, $\boldsymbol{D^{(1)}_n}$
 and $\boldsymbol{D^{(2)}_{n+1}}$}

\author{Atsuo Kuniba}
\email{atsuo@gokutan.c.u-tokyo.ac.jp}
\address{Institute of Physics, University of Tokyo, Komaba, Tokyo 153-8902, Japan}

\author{Sergey Sergeev}
\email{Sergey.Sergeev@canberra.edu.au}
\address{Faculty of Information Sciences and Engineering, University of Canberra, Canberra, Australia}

\dedicatory{Dedicated to Professor Vladimir Bazhanov on the occasion of
his sixtieth birthday.}

\maketitle

\vspace{1cm}
\begin{center}{\bf Abstract}
\end{center}
\vspace{0.4cm}
It is known that a solution of the tetrahedron equation generates
infinitely many solutions of the Yang-Baxter equation via suitable reductions.
In this paper this scheme is applied to
an oscillator solution of the tetrahedron equation involving bosons and fermions
by using special 3d boundary conditions.
The resulting solutions of the Yang-Baxter equation are identified
with the quantum $R$ matrices for the spin representations of
$B^{(1)}_n, D^{(1)}_n$ and $D^{(2)}_{n+1}$.

\section{Introduction}\label{sec:intro}
The tetrahedron equation \cite{Zamolodchikov:1980,Zamolodchikov:1981}
is a three-dimensional (3d) extension of
the Yang-Baxter (triangle) equation \cite{Bax}.
It is expressed as an equality between quartic products of
$R$ matrices and/or $L$ operators, and
serves as a sufficient condition for the layer to layer
transfer matrices in
the associated 3d lattice models to commute with each other.
In general, these $R$ matrices and $L$ operators
act on tensor product of three vector spaces
reflecting the three independent directions in the 3d lattice.

Tetrahedron equations possess two notable features.
First, they remain valid under $n$-fold composition of the $L$ operators
in one of the directions for arbitrary $n$.
Namely they straightforwardly generalize to the $n$-layer situation,
which is analogous to the (rather trivial) fact that
a single Yang-Baxter equation implies the
commutativity of row transfer matrices for
arbitrary row lengths in two-dimension (2d).
Second, if one of the three spaces is traced out or
evaluated away appropriately, the tetrahedron equation
reduces to the Yang-Baxter equation among the resulting objects.
We refer to the space so masked
as the ``(third) hidden direction".
In fact, it appears as a space of internal degrees of freedom
attached to each lattice site from the resulting 2d world point of view.

Combining the above two features leads to the following fact:
a solution of the tetrahedron equation
generates an infinite series of solutions of the Yang-Baxter equation.
This phenomenon is known as \emph{dimension-rank} transmutation.
It has been implemented earlier for a certain
3d $L$ operator by taking the trace which corresponds to
the \emph{periodic boundary condition} in the hidden direction \cite{BS06, Bax:1986, BB:1992}.
The resulting solutions of the Yang-Baxter equation have been identified
with the quantum $R$ matrices for a class of finite dimensional
representations of $U_q(\widehat{sl}_n)$.

In this paper
we introduce another type of boundary conditions
in the hidden direction and study the resulting solutions of
the Yang-Baxter equation.
We start from the solution of the tetrahedron equation
consisting of $q$-oscillator 3d $R$ matrix and fermionic 3d $L$ operators,
which are the same as \cite{BS06}.
We construct special  \emph{boundary states}  in a bosonic Fock space
(the hidden direction) and show that they
are eigenvectors of the 3d $R$-matrix, which is the key
to make our reduction scheme work.
By evaluating the $L$ operators with respect to these boundary states,
we derive three series of solutions of the Yang-Baxter equation.
Our main result is that they produce the
quantum $R$ matrices for the spin
representations of
$U_q(B_n^{(1)}), U_q(D_n^{(1)})$
and $U_q(D_{n+1}^{(2)})$
depending on the 3d boundary conditions.
In particular we observe a curious correspondence
between the Dynkin diagrams of these algebras
and the boundary states in the Fock space (Remark \ref{re:dynkin}),
giving a new insight into the quantum group symmetry
of the 3d integrable models.

The layout of the paper is as follows.
In Section \ref{sec:teybe}, we explain general schemes to
obtain series of solutions to the Yang-Baxter equation
from a solution of the tetrahedron equation.
Section \ref{sec:osc} presents a concrete example
of 3d $R$ matrix and 3d $L$ operator in terms of an oscillator algebra
and its Fock representation.
Section \ref{sec:chi} describes the special vectors
in the Fock space which serve as 3d boundary conditions.
We prove the key property that they are eigenvectors of
the 3d $R$ matrix in Proposition \ref{a:pro:chi}.
In Section \ref{a:sec:r3d}, we derive solutions
${\mathscr R}(x)$ of the Yang-Baxter equation from
the 3d $L$ operator via the reduction using the special vectors.
Elements of ${\mathscr R}(x)$ are expressed in an 
matrix product ansatz form.
Section \ref{a:sec:qr} collects formulas for the
spin representations of
$U_q(B_n^{(1)}), U_q(D_n^{(1)})$ \cite{O} and  $U_q(D^{(2)}_{n+1})$
and the associated quantum $R$ matrices $R(x)$
which are necessary for the proof of our main theorem.
Although $U_q(D^{(2)}_{n+1})$ case is just a slight variation of
$U_q(B^{(1)}_n)$, it seems to have been treated nowhere in the
literature so far.
In Section \ref{a:sec:proof}, we present an expository proof of 
our main result ${\mathscr R}(x) = R(x)$ (Theorem \ref{a:th:main}).
Depending on the choices of the special vectors (3d boundary conditions),
the algebras
$U_q(B_n^{(1)}), U_q(D_n^{(1)})$ and  $U_q(D^{(2)}_{n+1})$ are covered.
It suggests a certain correspondence between
the boundary conditions and
relevant Dynkin diagrams (Remark \ref{re:dynkin}).
Our strategy of the proof is to establish that ${\mathscr R}(x)$ satisfies
the standard characterization \cite{Ji} of the
quantum $R$ matrix $R(x)$ \cite{Ji, Baz}, and does not
rely on explicit formulas of the matrix elements.

\section{Tetrahedron equation and Yang-Baxter equation: General scheme}
\label{sec:teybe}
In this section, we explain a general scheme to
generate a series of solutions to the Yang-Baxter equation
from a solution of the tetrahedron equation.

Let ${\mathscr R} _{1,2,3}$ be a linear operator acting on the tensor product of three vector spaces:
\begin{equation}
\mathscr{R}_{1,2,3}\in\mathrm{End}(F\otimes F\otimes F)\;.
\end{equation}
Here the space $F$ (and $V$ coming soon as well)
can be either finite or infinite dimensional
for our general discussion in this section.
We call $\mathscr{R}_{1,2,3}$ (3d) $R$ matrix.
The indices in ${\mathscr R}_{1,2,3}$
are just the reminder of the three copies of $F$ which are labeled
and exhibited when preferable as
\begin{equation}
F\otimes F\otimes F \;=\; \overset{1}{F}\otimes \overset{2}{F}\otimes \overset{3}{F}\;.
\end{equation}
Consider another vector space $V$ and
let ${\mathscr L}_{1,a,b}$ be an operator
acting on the tensor product
$F\otimes V\otimes V$, i.e.,
\begin{equation}\label{a:elop}
{\mathscr L} _{1,a,b}\in\mathrm{End}(F\otimes V\otimes V)\;,
\end{equation}
where again the indices are just
labels (not parameters) of the spaces as
$\overset{1}{F}\otimes \overset{a}{V}\otimes \overset{b}{V}$.
We call ${\mathscr L}_{1,a,b}$ (3d) $L$ operator.
A version of the quantum tetrahedron equation \cite{BS06} is
\begin{equation}\label{TE}
{\mathscr R} _{1,2,3}\,{\mathscr L} _{1,a,b} \,
{\mathscr L} _{2,a,c} {\mathscr L} _{3,b,c} \;=\;
{\mathscr L} _{3,b,c} \,
{\mathscr L} _{2,a,c}\,
 {\mathscr L} _{1,a,b} \,{\mathscr R} _{1,2,3}\;.
\end{equation}
It is an equation in
$\mathrm{End}(\overset{1}{F}\otimes \overset{2}{F} \otimes
\overset{3}{F}\otimes
\overset{a}{V} \otimes \overset{b}{V} \otimes \overset{c}{V})$.
The operators act as identities on the spaces whose labels
are not included in their indices.
The $L$ operators
${\mathscr L} _{1,a,b} , {\mathscr L} _{2,a,c}$ and
${\mathscr L} _{3,b,c} $
are identical except that they act nontrivially on
different sets of tensor components.
The relation (\ref{TE}) can be depicted as Figure \ref{fig:TE}.

\begin{figure}[ht]
\begin{center}
\setlength{\unitlength}{0.25mm}
\begin{picture}(500,200)
\put(0,0)
 {\begin{picture}(200,200) \thinlines
 \drawline[-30](90.00, 42.50)(30.00, 87.50)
 \put(94,32){\scriptsize $a$}
 \drawline[-30](62.50, 37.50)(167.50, 112.50)
 \put(52,30){\scriptsize $b$}
 \drawline[-30](12.50, 75.00)(177.50, 105.00)
 \put(2,70){\scriptsize $c$}
 \Thicklines
 \path(75.00, 25.00)(105.00, 175.00)
 \put(72,15){\scriptsize $1$}
 \path(25.00, 62.50)(115.00, 167.50)
 \put(20,52){\scriptsize $2$}
 \path(162.50, 87.50)(87.50, 162.50)
 \put(165,80){\scriptsize $3$}
 \end{picture}}
\put(300,0)
 {\begin{picture}(200,200) \thinlines
 \drawline[-30](110.00, 157.50)(170.00, 112.50)
 \put(180,100){\scriptsize $a$}
 \drawline[-30](137.50, 162.50)(32.50, 87.50)
 \put(22,75){\scriptsize $b$}
 \drawline[-30](187.50, 125.00)(22.50, 95.00)
 \put(10,90){\scriptsize $c$}
 \Thicklines
 \path(125.00, 175.00)(95.00, 25.00)
 \put(92,5){\scriptsize $1$}
 \path(175.00, 137.50)(85.00, 32.50)
 \put(73,22){\scriptsize $2$}
 \path(37.50, 112.50)(112.50, 37.50)
 \put(120,27){\scriptsize $3$}
 \end{picture}}
\put(245,100){$=$}
\end{picture}
\end{center}
\caption{A pictorial representation of the tetrahedron equation (\ref{TE}).}
\label{fig:TE}
\end{figure}
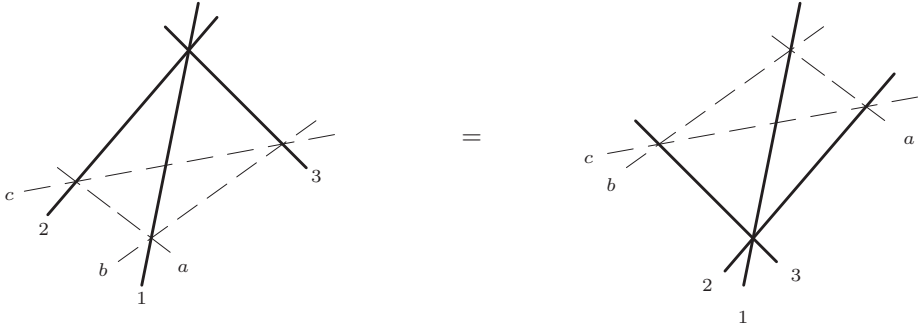

\bigskip
\noindent
We regard this as a one-layer relation.
It is straightforward to generalize it to the $n$-layer case
for any positive integer $n$.
To do so we introduce the $n$-fold tensor product
${\boldsymbol V} = V^{\otimes n}$ and also attach the
labels
$\boldsymbol{a} =(a_1,\ldots, a_n)$ for distinction to these
spaces as
\begin{equation}\label{a:Va}
\overset{\boldsymbol{a}}{\boldsymbol V}= \overset{a_1}{V}
\otimes \overset{a_2}{V}
\otimes \cdots \otimes \overset{a_n}{V}.
\end{equation}
Let $\overset{\boldsymbol{b}}{\boldsymbol V}$
and $\overset{\boldsymbol{c}}{\boldsymbol V}$ be copies
of $\boldsymbol{V}$
with different labels $\boldsymbol{b}$ and $\boldsymbol{c}$.
We compose the elementary $L$ operators (\ref{a:elop})
$n$ times as
\begin{equation}\label{a:cL}
{\mathscr L} _{1,\boldsymbol{a},\boldsymbol{b}}\;=\;
{\mathscr L} _{1,a_1,b_1}{\mathscr L} _{1,a_2,b_2}\cdots
{\mathscr L} _{1,a_n,b_n}\;\in\;
\mathrm{End}(\overset{1}{F}\otimes \overset{\boldsymbol{a}}{\boldsymbol V}
\otimes \overset{\boldsymbol{b}}{\boldsymbol V}).
\end{equation}
Define
${\mathscr L} _{2,\boldsymbol{a},\boldsymbol{c}}$ and
${\mathscr L} _{3,\boldsymbol{b},\boldsymbol{c}}$ similarly.
They act on
$\overset{1}{F}\otimes\overset{2}{F}\otimes\overset{3}{F}\otimes
\overset{\boldsymbol{a}}{\boldsymbol V}
\otimes \overset{\boldsymbol{b}}{\boldsymbol V}
\otimes \overset{\boldsymbol{c}}{\boldsymbol V}$
nontrivially on the components specified by their indices.
Then the elementary tetrahedron equation (\ref{TE})
is lifted up directly to the $n$-layer version:
\begin{equation}\label{TE-2}
{\mathscr R} _{1,2,3}\,
{\mathscr L} _{1,\boldsymbol{a},\boldsymbol{b}}\,
 {\mathscr L} _{2,\boldsymbol{a},\boldsymbol{c}} \,
 {\mathscr L} _{3,\boldsymbol{b},\boldsymbol{c}} \;=\;
{\mathscr L} _{3,\boldsymbol{b},\boldsymbol{c}} \,
{\mathscr L} _{2,\boldsymbol{a},\boldsymbol{c}} \,
{\mathscr L} _{1,\boldsymbol{a},\boldsymbol{b}} \,
{\mathscr R} _{1,2,3}\;.
\end{equation}
In fact, one can carry ${\mathscr R} _{1,2,3}$ through
the $L$ operators by repeated use of (\ref{TE})
from layer to layer.
See Figures \ref{fig:TE2-1}--\ref{fig:TE2-3}.

\begin{figure}[ht]
\scalebox{0.3}{\includegraphics{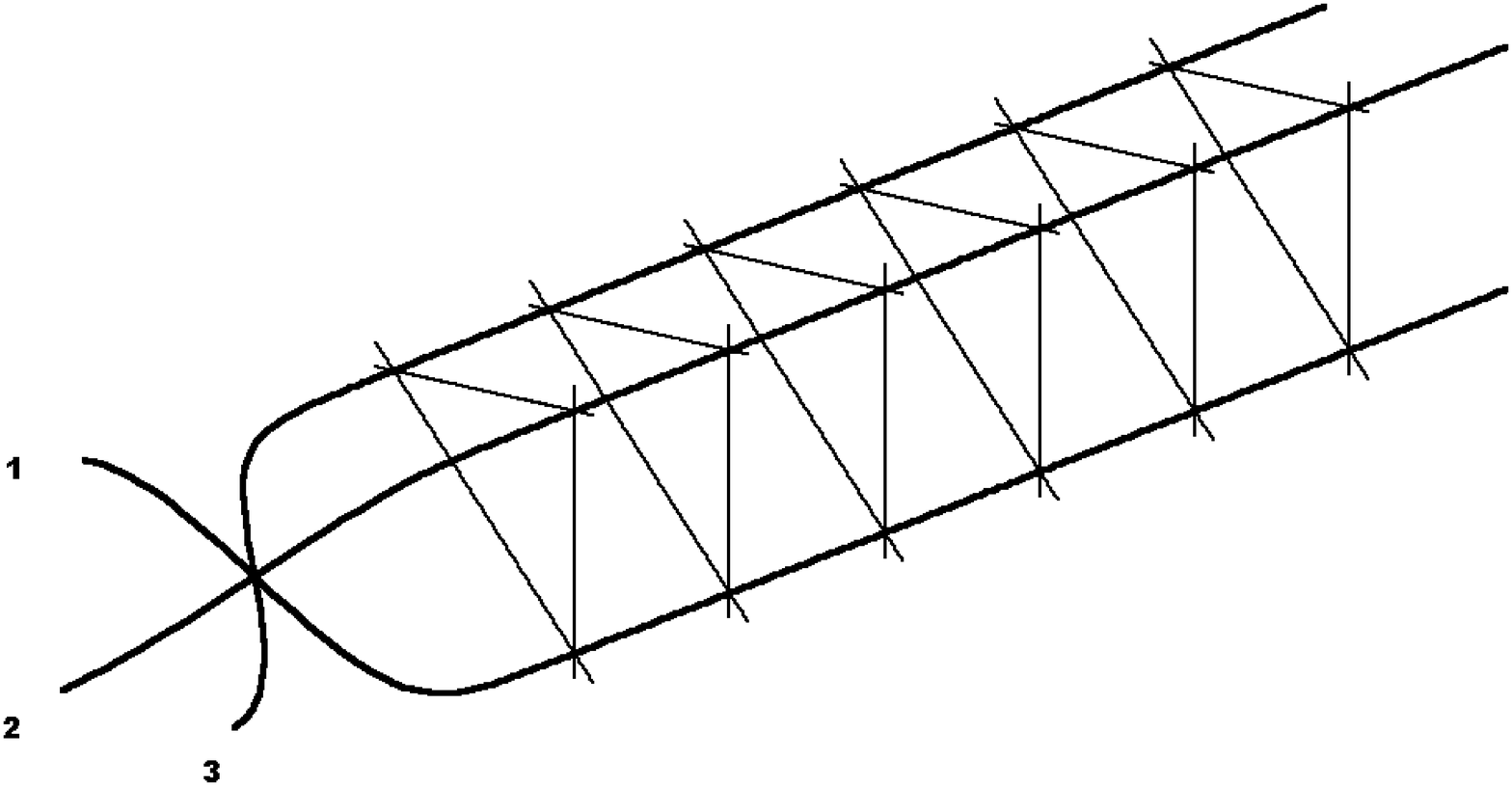}}
\caption{A pictorial representation of the left hand side of the tetrahedron equation (\ref{TE-2}).}
\label{fig:TE2-1}
\end{figure}

\begin{figure}[ht]
\scalebox{0.3}{\includegraphics{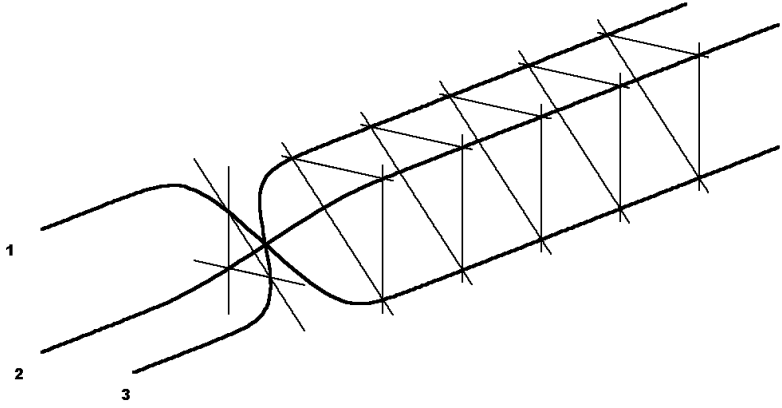}}
\caption{Result of application of the first elementary tetrahedron equation (\ref{TE}) to the left hand side of (\ref{TE-2}).}
\label{fig:TE2-2}
\end{figure}

\begin{figure}[ht]
\scalebox{0.3}{\includegraphics{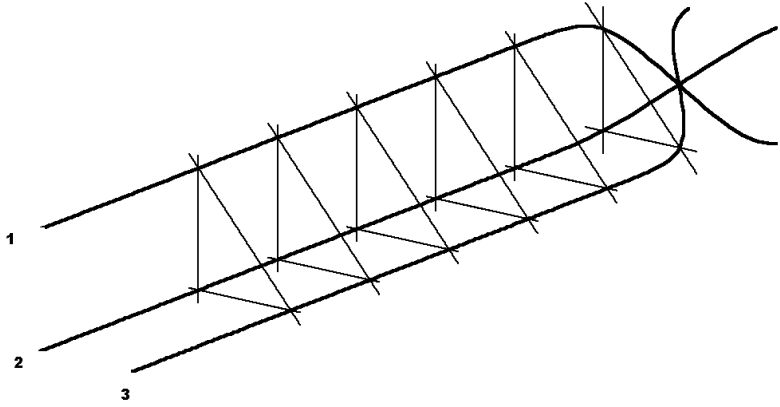}}
\caption{A pictorial representation of the right hand side of the tetrahedron equation (\ref{TE-2}).}
\label{fig:TE2-3}
\end{figure}

\bigskip
Now we explain the prescriptions to reduce
the tetrahedron equation (\ref{TE-2}) to
the Yang-Baxter equation.
The idea is to perform certain evaluations for
$\mathrm{End}(F\otimes F \otimes F)$ part
regarding it as ``internal degrees of freedom" or
a ``hidden direction'' perpendicular to the ``vertices"
corresponding to the other part
$\mathrm{End}({\boldsymbol V}
\otimes {\boldsymbol V} \otimes
{\boldsymbol V})$.
The traditional way to do it is to take the trace \cite{BS06}.
Assuming that ${\mathscr R} _{1,2,3}$ is invertible,
(\ref{TE-2}) leads to the Yang-Baxter equation\footnote{
There is a room to include a
spectral parameter by inserting a ``diagonal field" in taking the trace.
See \cite{BS06} for detail.}
\begin{equation}
{\mathscr R}_{{\boldsymbol a},{\boldsymbol b}}\,
{\mathscr R}_{{\boldsymbol a},{\boldsymbol c}}\,
{\mathscr R}_{{\boldsymbol b},{\boldsymbol c}}
=
{\mathscr R}_{{\boldsymbol b},{\boldsymbol c}}\,
{\mathscr R}_{{\boldsymbol a},{\boldsymbol c}}\,
{\mathscr R}_{{\boldsymbol a},{\boldsymbol b}} \in
\mathrm{End}(\overset{\boldsymbol a}{\boldsymbol V}
\otimes \overset{\boldsymbol b}{\boldsymbol V} \otimes \overset{\boldsymbol c}{\boldsymbol V})
\end{equation}
for the $R$ matrix defined by
\begin{equation}
{\mathscr R}_{{\boldsymbol a},{\boldsymbol b}}
= \mathrm{Tr}_{F}
({\mathscr L} _{1,\boldsymbol{a},\boldsymbol{b}})
\in \mathrm{End}
(\overset{\boldsymbol a}{\boldsymbol V}\otimes \overset{\boldsymbol b}{\boldsymbol V}),
\end{equation}
where $F$ here is actually the first copy $\overset{1}{F}$.
The other ones
${\mathscr R}_{{\boldsymbol b},{\boldsymbol c}}$ and
${\mathscr R}_{{\boldsymbol a},{\boldsymbol c}}$
are defined similarly  and they are identical
except the nontrivially acting components in
$\overset{\boldsymbol a}{\boldsymbol V}
\otimes \overset{\boldsymbol b}{\boldsymbol V}
\otimes \overset{\boldsymbol c}{\boldsymbol V}$.
For a concrete result along this line, we refer to \cite{BS06},
which has reproduced a class of quantum $R$ matrices for
$U_q(\widehat{sl}_{n})$.

In this paper we consider a new scenario.
Namely, suppose there are vectors
\begin{equation}
|\chi_s(x,y)\rangle
=|\chi_s(x)\rangle \otimes |\chi_s(xy)\rangle
\otimes |\chi_s(y)\rangle
\in F\otimes F\otimes F\;,
\end{equation}
where $x,y$ are extra (spectral) parameters, such that
\begin{equation}\label{RX}
{\mathscr R} _{1,2,3} |\chi_s(x,y)\rangle   = |\chi_s(x,y)\rangle \;.
\end{equation}
The index $s$ is a label of (possibly more than one) such vectors.
Suppose also similar vectors exist in the dual space:
\begin{equation}
\langle \overline \chi_s(x,y)|
=\langle \overline\chi_s(x)| \otimes
\langle \overline\chi_s(xy)| \otimes
\langle \overline\chi_s(y)|
\in F^*\otimes F^*\otimes F^*\;,
\end{equation}
with the property
\begin{equation}\label{XR}
\langle \overline \chi_s(x,y) |
{\mathscr R} _{1,2,3}   =\langle \overline \chi_s(x,y) |\;.
\end{equation}
Then, evaluating the tetrahedron equation (\ref{TE-2}) between
$\langle \overline \chi_s(x,y)|$ and
$|\chi_t(1,1)\rangle $\footnote{In general, $|\chi_t(x',y')\rangle $
can be used from the right. However, in our examples
treated later, such a freedom is absorbed elsewhere
and becomes equivalent to
$|\chi_t(1,1)\rangle$.},
one produces the Yang-Baxter equation
\begin{equation}\label{YBE}
{\mathscr R}_{\boldsymbol{a},\boldsymbol{b}}(x)
{\mathscr R}_{\boldsymbol{a},\boldsymbol{c}}(xy)
{\mathscr R}_{\boldsymbol{b},\boldsymbol{c}}(y)
=
{\mathscr R}_{\boldsymbol{b},\boldsymbol{c}}(y)
{\mathscr R}_{\boldsymbol{a},\boldsymbol{c}}(xy)
{\mathscr R}_{\boldsymbol{a},\boldsymbol{b}}(x)\;
\in
\mathrm{End}(\overset{\boldsymbol a}{\boldsymbol V}
\otimes \overset{\boldsymbol b}{\boldsymbol V} \otimes \overset{\boldsymbol c}{\boldsymbol V}),
\end{equation}
which forms a series corresponding to the choices $n=1,2,\ldots$.
The $R$ matrices here are obtained from the $L$ operator
by the dual pairing of
$\overset{1}{F}{}^\ast$ and $\overset{1}{F}$ as\footnote{
We regard
$\mathrm{End}(\overset{\boldsymbol a}{\boldsymbol V}
\otimes \overset{\boldsymbol b}{\boldsymbol V})$ here as
naturally embedded into
$\mathrm{End}(\overset{\boldsymbol a}{\boldsymbol V}
\otimes \overset{\boldsymbol b}{\boldsymbol V} \otimes \overset{\boldsymbol c}{\boldsymbol V})$ in (\ref{YBE}).}

\begin{equation}\label{a:xLx}
{\mathscr R}_{\boldsymbol{a},\boldsymbol{b}}(x)\,
(={\mathscr R}^{s, t}_{\boldsymbol{a},\boldsymbol{b}}(x) ) =
\langle\overline\chi_s(x)|
 {\mathscr L} _{1,\boldsymbol{a},\boldsymbol{b}}
 |\chi_t(1)\rangle
 \in\textrm{End}(
 \overset{\boldsymbol a}{\boldsymbol V}
\otimes \overset{\boldsymbol b}{\boldsymbol V})\;\; \textrm{etc}.
\end{equation}
Note the extra option regarding the choices of $s$ and $t$.
In fact, our main Theorem \ref{a:th:main}
will utilize this degree of freedom to
cover the three affine Lie algebras
$B^{(1)}_n,  D^{(1)}_n$ and $D^{(2)}_{n+1}$ in a
unified scheme.
Exhibiting the dependence on $s, t$ (and suppressing the
trivial reference to the labels ${\boldsymbol a}, {\boldsymbol b}$ of the
tensor components),
we will write the $R$ matrix
also as ${\mathscr R}^{s, t}(x) \in
\mathrm{End}(\boldsymbol{V} \otimes \boldsymbol{V})$.
One may view the bra and ket vectors in (\ref{a:xLx}) as specifying the
special boundary condition along the hidden direction as
in Figure \ref{fig:Rst}. See also Remark \ref{re:dynkin}.

\begin{figure}[ht]
\setlength{\unitlength}{0.20mm}
\begin{picture}(650,200)
\put(0,0){\begin{picture}(400,200)
%
\path(40,20)(220,110)
\path(280,140)(360,180)
\dashline[0]{5}(220,110)(280,140)
\path(100,0)(100,100)\path(50,50)(150,50)
\path(200,60)(200,140)\path(160,100)(240,100)
\path(300,120)(300,180)\path(270,150)(330,150)
\put(-20,15){\scriptsize $\langle\bar{\chi}_s(x)|$}
\put(365,182){\scriptsize $|\chi_t(1)\rangle$} \put(90,105){\scriptsize
$a_1$}\put(190,145){\scriptsize $a_2$}
\put(290,185){\scriptsize $a_n$}
\put(152,46){\scriptsize $b_1$} \put(242,96){\scriptsize
$b_2$} \put(332,146){\scriptsize $b_n$}
\end{picture}}
\put(380,100){$\to$}
\put(450,0){\begin{picture}(200,200)
\path(70,70)(130,130)
\put(60,60){\scriptsize $s$}
\put(132,132){\scriptsize $t$}
\path(90,20)(90,160)\path(100,30)(100,170)\path(110,40)(110,180)
\path(20,90)(160,90)\path(30,100)(170,100)\path(40,110)(180,110)
\put(182,95){\scriptsize
$\boldsymbol{b}$} \put(90,190){\scriptsize $\boldsymbol{a}$}
\put(150,20){${\mathscr R}^{s, t}_{\boldsymbol{a},\boldsymbol{b}}(x)$}
\end{picture}}
\end{picture}
\caption{A pictorial representation of
${\mathscr R}^{s, t}_{\boldsymbol{a},\boldsymbol{b}}(x)$ (\ref{a:xLx}).}
\label{fig:Rst}
\end{figure}

There is another version of the $R$ matrix
\begin{equation}\label{a:Rc}
\check{\mathscr R}(x) \,(= \check{\mathscr R}^{s, t}(x))
= P \,{\mathscr R}^{s,t}(x)\qquad
(P(u\otimes v) = v \otimes u),
\end{equation}
in terms of which the Yang-Baxter equation takes another familiar form:
\begin{equation}
\label{a:ybe}
(\check{\mathscr R}(x) \otimes 1)(1\otimes \check{\mathscr R}(xy))(1\otimes \check{\mathscr R}(y))
=(1\otimes \check{\mathscr R}(y))(\check{\mathscr R}(xy) \otimes 1)(1\otimes \check{\mathscr R}(x)).
\end{equation}
We will use the both versions ${\mathscr R}^{s, t}(x)$
and $\check{\mathscr R}^{s, t}(x)$ for convenience.

The rest of the paper is devoted to a concrete realization of the above
scheme with the following choice:
\begin{equation}\label{a:choice}
\begin{split}
&\boldsymbol{V} = V^{\otimes n},\quad
V = \C^2\; \;(\text{fermionic Fock space}),\\
&F = \bigoplus_{m \in \Z_{\ge 0}} \C|m\rangle\;\;
(\text{bosonic Fock space}).
\end{split}
\end{equation}

\section{Oscillators and the tetrahedron equation}\label{sec:osc}
Let us present  an example of the
$R$ matrix and $L$ operators satisfying the
tetrahedron equation (\ref{TE}).
Let ${\mathcal A}$ be
the associative algebra (called oscillator algebra)
generated by
$\rm{\bf  a}^+, {\rm {\bf a}}^-, {\rm {\bf k}}$ with the relations
\begin{align}\label{a:kaa}
{\rm {\bf k}} \,{\rm {\bf a}}^{\pm} = p^{\pm 1}{\rm {\bf a}}^{\pm}\,{\rm {\bf k}},\quad {\rm {\bf a}}^+ {\rm {\bf a}}^- =1-p^{-1}{\rm {\bf k}}^2,\quad
{\rm {\bf a}}^- {\rm {\bf a}}^+ = 1-p\, {\rm {\bf k}}^2.
\end{align}
Here $p$ is an indeterminate.
We use the representation on the bosonic Fock space
$F=\bigoplus_{m \in \Z_{\ge 0}} \C|m\rangle$ as follows:
\begin{equation}\label{a:frep}
\begin{split}
&{\rm {\bf a}}^+|m\rangle =\sqrt{1-p^{2m+2}}|m+1\rangle,\quad
{\rm {\bf a}}^-|m\rangle =\sqrt{1-p^{2m}}|m-1\rangle,\quad
{\rm {\bf k}} | m \rangle = p^{m+\frac{1}{2}} | m \rangle,\\
&\langle m | {\rm {\bf a}}^- =\langle m+1 |\sqrt{1-p^{2m+2}},\quad
\langle m | {\rm {\bf a}}^+ =\langle m-1 |\sqrt{1-p^{2m}},\quad
\langle m | {\rm {\bf k}}= \langle m | p^{m+\frac{1}{2}}.
\end{split}
\end{equation}
The vector $|0\rangle$ is the total vacuum.
Let $V=\C^2$ as in (\ref{a:choice}) and
introduce an $L$ operator  ${\mathscr L} \in
\mathrm{End}(F \otimes V\otimes V)$  by
\begin{equation}\label{a:Lmat}
{\mathscr L} = \bigl({\mathscr L} (\alpha', \beta' | \alpha, \;\beta)
\bigr)_{(\alpha',\beta'), (\alpha,\beta)},\quad
{\mathscr L} (\alpha', \beta' | \alpha, \;\beta)
=\begin{pmatrix}
1 & 0 & 0 & 0\\
0 & -i{\rm {\bf k}} & {\rm {\bf a}}^+ & 0\\
0 & {\rm {\bf a}}^- & -i{\rm {\bf k}} & 0\\
0 & 0 & 0 & 1
\end{pmatrix},
\end{equation}
where ${\rm {\bf a}}^\pm, {\rm {\bf k}}$ are regarded as
representations (\ref{a:frep}).
The row index $(\alpha',\beta')$ and the column index
$(\alpha, \beta)$  are arranged in the order
$(0,0), (1,0), (0,1), (1,1)$
form top to bottom and left to right, respectively.
The $0$ and $1$ label the base vectors of $V$ corresponding to
the fermionic states.
In this interpretation,  the relations (\ref{a:kaa}) are in fact free-fermion conditions for ${\mathscr L} $ \cite{Bax,S09a}.
The $L$ operator  (\ref{a:Lmat}) is traditionally written also as
${\mathscr L}[{\mathcal A}]$.

Let  ${\mathcal A}_i$ be the copy of ${\mathcal A}$
with generators
${\rm {\bf a}}^\pm_i, {\rm {\bf k}}_i\, (i=1,2,3)$ that
act on the $i$th component of
$\overset{1}{F}\otimes \overset{2}{F} \otimes \overset{3}{F}$.
Now one can consider the $L$ operators
\begin{equation}
{\mathscr L} _{a,b}[\mathcal{A}_1],\;
{\mathscr L} _{a,c}[\mathcal{A}_2],\;
{\mathscr L} _{b,c}[\mathcal{A}_3]
\in \mathrm{End}(\overset{1}{F}\otimes \overset{2}{F} \otimes \overset{3}{F}
\otimes \overset{a}{V}\otimes \overset{b}{V}\otimes\overset{c}{V})
\end{equation}
acting nontrivially on the slots specified by their indices.
\begin{theorem}\label{a:th:RLLL}
There is a
unique (up to a constant multiple) invertible operator
${\mathscr R}\in\mathrm{End}(F\otimes F\otimes F)$ such that
\begin{equation}\label{theTE}
{\mathscr R} \;
{\mathscr L} _{a,b}[\mathcal{A}_1]
{\mathscr L} _{a,c}[\mathcal{A}_2]
{\mathscr L} _{b,c}[\mathcal{A}_3]
=
{\mathscr L} _{b,c}[\mathcal{A}_3]
{\mathscr L} _{a,c}[\mathcal{A}_2]
{\mathscr L} _{a,b}[\mathcal{A}_1]
 \; {\mathscr R}.
\end{equation}
\end{theorem}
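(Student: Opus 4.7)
The plan is to reduce the operator identity (\ref{theTE}) to a finite system of intertwining relations in $\mathrm{End}(F \otimes F \otimes F)$ and to exhibit a unique solution to those relations.

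I would begin by exploiting the block structure of $\mathscr{L}$. From (\ref{a:Lmat}), $\mathscr{L}$ is block diagonal in the ordered basis $(0,0), (1,0), (0,1), (1,1)$ of $V \otimes V$, with trivial outer $1 \times 1$ blocks and a nontrivial $2 \times 2$ middle block. Each $\mathscr{L}_{a,b}[\mathcal{A}_i]$ therefore preserves the fermion number on its pair of $V$-components, and the triple products on both sides of (\ref{theTE}) preserve the total charge $N = \alpha_a + \alpha_b + \alpha_c$. Taking matrix elements in $V^{\otimes 3}$ splits (\ref{theTE}) into four subsystems indexed by $N \in \{0,1,2,3\}$; the sectors $N = 0$ and $N = 3$ are one-dimensional and the triple $L$-products act as the identity there, so these are automatic, and all the content lives in the two 3-dimensional middle sectors.

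Next, in each of those two sectors I would compute the two triple products to obtain $3 \times 3$ matrices $X$ and $Y$ with entries in $\mathcal{A}_1 \otimes \mathcal{A}_2 \otimes \mathcal{A}_3$, so that (\ref{theTE}) becomes $18$ intertwining relations of the form $\mathscr{R}\, X_{ij} = Y_{ij}\, \mathscr{R}$. For existence I would exhibit $\mathscr{R}$ in closed form as a normal-ordered expression in $\mathbf{a}^\pm_i, \mathbf{k}_i$ (of the type constructed in \cite{BS06}) and verify these $18$ relations by direct reduction using (\ref{a:kaa}); invertibility then follows from the explicit form. For uniqueness, one observes that the $X_{ij}$ contain creation-operator combinations that generate $F^{\otimes 3}$ cyclically from the triple vacuum $|0,0,0\rangle$, so the intertwiners determine $\mathscr{R}$ up to the single scalar $\langle 0,0,0|\mathscr{R}|0,0,0\rangle$.

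The main obstacle is the consistency of this overdetermined system: different sequences of intertwiners applied starting from the vacuum could a priori produce conflicting values for $\mathscr{R}$ on a given basis vector. The cleanest way around this, and the route effectively taken in \cite{BS06}, is to present the closed formula for $\mathscr{R}$ at once and verify all relations directly by reduction in $\mathcal{A}_1 \otimes \mathcal{A}_2 \otimes \mathcal{A}_3$. An abstract alternative is to use the grading by the oscillator number operators (on which $\mathbf{k}_i^2$ acts by $p^{2N_i+1}$) to decompose each compatibility check into a finite-dimensional linear problem on a fixed weight subspace of $F^{\otimes 3}$, where the required identities reduce to $q$-binomial relations settled by standard $q$-series manipulations.
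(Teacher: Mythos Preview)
Your reduction of (\ref{theTE}) to a finite family of intertwining relations in $\mathrm{End}(F^{\otimes 3})$, obtained by taking matrix elements in the charge sectors of $V^{\otimes 3}$, is exactly the first step the paper takes. The paper records the resulting relations compactly as (\ref{themap}): the adjoint action of $\mathscr{R}$ on ${\rm {\bf k}}_2{\rm {\bf a}}_1^{\pm}$, ${\rm {\bf a}}_2^{\pm}$, ${\rm {\bf k}}_2{\rm {\bf a}}_3^{\pm}$, ${\rm {\bf k}}_1{\rm {\bf k}}_2$, ${\rm {\bf k}}_2{\rm {\bf k}}_3$ is written out explicitly, which is equivalent to your $18$ relations $\mathscr{R}\,X_{ij}=Y_{ij}\,\mathscr{R}$.

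Where the paper diverges from your plan is in the existence/uniqueness step. Rather than exhibiting a closed formula for $\mathscr{R}$ and checking all relations (your route), the paper observes that the assignments (\ref{themap}) define an \emph{automorphism} of $\mathcal{A}^{\otimes 3}$, and that the Fock representation of $\mathcal{A}^{\otimes 3}$ is irreducible. Since the twisted module is again a Fock module (the automorphism preserves the vacuum structure), Schur-type reasoning gives an invertible intertwiner $\mathscr{R}$, unique up to a scalar, without ever writing it down. This sidesteps entirely the consistency check you flag as the main obstacle: once one knows the map is an algebra automorphism, no overdetermination can arise. Your cyclicity argument for uniqueness is correct and morally the same as irreducibility, but your existence argument is more laborious --- the automorphism observation is what buys the paper its two-line proof.
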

\begin{proof}
A proof of this theorem can be found in \cite{BS06,BMS:2008}.
The matrix equation (\ref{theTE}) can be solved
straightforwardly in the form\footnote{We omit a formula for
$ {\mathscr R}  \, ({\rm {\bf k}}_2)^2 \,{\mathscr R}^{-1}$ as it will not be used
in this paper. See \cite{BS06}.}
\begin{equation}\label{themap}
\left\{\begin{array}{lll}
\ds {\mathscr R}  \;\; {\rm {\bf k}}_2^{}{\rm{\bf a}}_1^{\pm} \;{\mathscr R}^{-1} \;=\; {\rm {\bf k}}_3^{} {\rm{\bf a}}_1^{\pm} + {\rm {\bf k}}_1^{} {\rm{\bf a}}_2^{\pm} {\rm{\bf a}}_3^{\mp}\;,\\
[2mm]
\ds  {\mathscr R} \;\;  {\rm{\bf a}}_2^{\pm}  \;{\mathscr R}^{-1} \;=\; {\rm{\bf a}}_1^{\pm} {\rm{\bf a}}_3^{\pm} - {\rm {\bf k}}_1^{} {\rm {\bf k}}_3^{} {\rm{\bf a}}_2^{\pm}\;,\\
[2mm]
\ds {\mathscr R}  \;\;  {\rm {\bf k}}_2^{} {\rm{\bf a}}_3^{\pm}  \;  {\mathscr R}^{-1} \;=\; {\rm {\bf k}}_1^{} {\rm{\bf a}}_3^{\pm} + {\rm {\bf k}}_3^{} {\rm{\bf a}}_1^{\mp} {\rm{\bf a}}_2^{\pm}\;,\\
[2mm]
{\mathscr R} \;\;  {\rm {\bf k}}_1 {\rm {\bf k}}_2 \, {\mathscr R}^{-1} = {\rm {\bf k}}_1 {\rm {\bf k}}_2,\\
[2mm]
{\mathscr R} \;\;  {\rm {\bf k}}_2 {\rm {\bf k}}_3 \, {\mathscr R}^{-1} = {\rm {\bf k}}_2 {\rm {\bf k}}_3.
\end{array}\right.
\end{equation}
One can easily check that (\ref{themap}) defines the automorphism of $\mathcal{A}^{\otimes 3}$. In addition, the Fock space representation is irreducible. Therefore, ${\mathscr R}$ exists and is unique up to a constant multiple.
\end{proof}

\noindent
The relation (\ref{theTE}) is equivalent to quantum Korepanov
equation, it can be also seen as the
tetrahedral Zamolodchikov algebra/local Yang-Baxter equation for the adjoint action of ${\mathscr R}$. See the long story of \cite{MN:1989,Korepanov:1993jsp,Korepanov:1995,KKS:1998,BS06,BMS:2008} for details.
We fix the normalization of ${\mathscr R}$ by
\begin{equation}
{\mathscr R}|0\rangle = |0\rangle\;.
\end{equation}
The relation (\ref{theTE}) evidently possesses the tetrahedral structure (\ref{TE})
by the identification
\begin{equation}
{\mathscr L} _{a,b}[\mathcal{A}_1] = {\mathscr L} _{1,a,b}\;,\;
{\mathscr L} _{a,c}[\mathcal{A}_2] = {\mathscr L} _{2,a,c}\;,\;
{\mathscr L} _{b,c}[\mathcal{A}_3] = {\mathscr L} _{3,b,c}\;,
\quad {\mathscr R}={\mathscr {\mathscr R}}_{1,2,3}.
\end{equation}
One can verify as well, the adjoint action (\ref{themap}) of ${\mathscr R}$ coincides with the inverse adjoint action, therefore
\begin{equation}\label{a:Rinv}
{\mathscr R}^{-1}={\mathscr R}\;.
\end{equation}
An explicit formula for the matrix elements of ${\mathscr R}$
is included in Appendix \ref{sec:app}.
Although, we will only need the relations
(\ref{themap}) and (\ref{a:Rinv})  later in this paper.

\section{\mathversion{bold} Special vectors in Fock space}\label{sec:chi}
Let us give two vectors
$|\chi_1(x,y)\rangle$ and $|\chi_2(x,y)\rangle$ 
in the Fock space $F\otimes F \otimes F$ (and their duals)
having the properties (\ref{RX}) and (\ref{XR}),
which are the key to our construction. 
First we introduce the following vectors in $F$ and $F^\ast$:
\begin{align}
&| \chi_1(x)\rangle =
\frac{1}{(x\, {\rm {\bf a}}^+; p)_\infty}|0 \rangle,\qquad
| \chi_2(x)\rangle =
\frac{1}{(x({\rm {\bf a}}^+)^2; p^4)_\infty}|0 \rangle,\label{a:chi1}\\
&\langle \overline{\chi}_1(x) |
= \langle 0 |\frac{1}{(x \, {\rm {\bf a}}^-; p)_\infty},\qquad
\langle \overline{\chi}_2(x) |
= \langle 0 |\frac{1}{(x ({\rm {\bf a}}^-)^2; p^4)_\infty},
\label{a:chi2}
\end{align}
where $(x; p)_j = \prod_{i=1}^j(1-xp^{i-1})$ as usual. 
These vectors were introduced in \cite{S09b} without a proof of their properties.
It is straightforward to show
\begin{alignat}{2}
&\Bigl({\rm {\bf a}}^+-x^{-1}(1-p^{-\frac{1}{2}}{\rm {\bf k}})\Bigr)|\chi_1(x)\rangle = 0,\qquad&
&\Bigl({\rm{\bf a}}^--x(1+p^{\frac{1}{2}}{\rm {\bf k}})\Bigr)|\chi_1(x)\rangle = 0,\label{a:aket1}\\
&\langle \overline{\chi}_1(x) |
\Bigl({\rm {\bf a}}^--x^{-1}(1-p^{-\frac{1}{2}}{\rm {\bf k}})\Bigr)=0,&\quad
&\langle \overline{\chi}_1(x) |
\Bigl({\rm {\bf a}}^+-x(1+p^{\frac{1}{2}}{\rm {\bf k}})\Bigr)=0,\label{a:abra1}\\
&({\rm {\bf a}}^--x{\rm {\bf a}}^+)|\chi_2(x)\rangle =0,&
&\langle \overline{\chi}_2(x) | ({\rm {\bf a}}^+-x{\rm {\bf a}}^-)=0.\label{a:aket2}
\end{alignat}
By eliminating ${\rm {\bf k}}$ in (\ref{a:aket1}) and (\ref{a:abra1}), we also have
\begin{align}
&(x^{-1}{\rm {\bf a}}^-+px{\rm {\bf a}}^+-1-p)|\chi_1(x)\rangle = 0,\label{a:aket3}\\
&\langle \overline{\chi}_1(x) | (x^{-1}{\rm {\bf a}}^++px{\rm {\bf a}}^--1-p)=0.
\end{align}
Conversely any one  of the three equations in (\ref{a:aket1})  and
(\ref{a:aket3})  serve as
characterization of $|\chi_1(x)\rangle$ up to an overall normalization.
Similarly the first equation in (\ref{a:aket2}) fixes $|\chi_2(x)\rangle$
up to an overall scalar.
With regard to the dual vectors, the situation is parallel.
Now the vectors $|\chi_s(x,y)\rangle\,(s=1,2)$ and their duals are
defined by
\begin{equation}\label{a:kai}
\begin{split}
|\chi_s(x,y)\rangle
&= |\chi_s(x)\rangle \otimes |\chi_s(xy)\rangle \otimes |\chi_s(y)\rangle
\in F \otimes F \otimes F,\\
\langle\overline{\chi}_s(x,y)| &=
\langle \overline{\chi}_s(x) | \otimes
\langle \overline{\chi}_s(xy) | \otimes
\langle \overline{\chi}_s(y) |
\in F^* \otimes F^* \otimes F^*.
\end{split}
\end{equation}
Thus by setting $(u_1,u_2,u_3)=(x,xy,y)$, the vector
$|\chi_1(x,y)\rangle$
is characterized by
\begin{align}
&(u_i{\rm {\bf a}}^+_i-1+p^{-\frac{1}{2}}{\rm {\bf k}}_i)|\chi_1(x,y)\rangle = 0\;\;\text{or}
\label{a:ch11}\\
&(u_i^{-1}{\rm {\bf a}}^-_i-1-p^{\frac{1}{2}}{\rm {\bf k}}_i)|\chi_1(x,y)\rangle = 0\;\;\text{or}
\label{a:ch12}\\
&(u_i^{-1}{\rm {\bf a}}^-_i+pu_i{\rm {\bf a}}^+_i-1-p)|\chi_1(x,y)\rangle = 0
\label{a:ch13}
\end{align}
for each $i=1,2,3$,
and so is $|\chi_2(x,y)\rangle$ by
\begin{equation}
({\rm {\bf a}}^-_i- u_i{\rm {\bf a}}^+_i)|\chi_2(x,y)\rangle = 0 \quad
(i=1,2,3)\label{a:ch2}
\end{equation}
in addition to the normalization
$|\chi_s(x,y)\rangle
= |0\rangle \otimes  |0\rangle \otimes  |0\rangle +
\text{non-vacuum terms}$.
Similar characterization holds also for
$\langle\overline{\chi}_s(x,y)|$.
In what follows we denote ${\mathscr R}_{1,2,3}$
simply by ${\mathscr R}$.
\begin{proposition}\label{a:pro:chi}
The vector $|\chi_s(x,y)\rangle $ and its dual
satisfy the relations (\ref{RX}) and (\ref{XR}).
Namely, the following equalities are valid for $s=1,2$:
\begin{equation}
{\mathscr R}|\chi_s(x,y)\rangle  = |\chi_s(x,y)\rangle ,\quad
\langle\overline{\chi}_s(x,y)|{\mathscr R} = \langle\overline{\chi}_s(x,y)|.
\end{equation}
\end{proposition}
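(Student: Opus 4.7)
We prove $\mathscr{R}|\chi_s(x,y)\rangle = |\chi_s(x,y)\rangle$; the dual relation $\langle\overline{\chi}_s(x,y)|\mathscr{R} = \langle\overline{\chi}_s(x,y)|$ follows from the identical argument applied to the dual characterizations \eqref{a:abra1}, \eqref{a:aket2}, together with $\mathscr{R}^{-1} = \mathscr{R}$ from \eqref{a:Rinv}. Set $|\Psi\rangle := \mathscr{R}|\chi_s(x,y)\rangle$. Since $\mathscr{R}|0\rangle\otimes|0\rangle\otimes|0\rangle = |0\rangle\otimes|0\rangle\otimes|0\rangle$, one gets $|\Psi\rangle = |0\rangle\otimes|0\rangle\otimes|0\rangle + \text{(non-vacuum terms)}$, matching the normalization of $|\chi_s(x,y)\rangle$. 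By the uniqueness noted after \eqref{a:ch13} and \eqref{a:ch2}, it therefore suffices to show that $|\Psi\rangle$ is annihilated by the same site-$i$ operators as $|\chi_s(x,y)\rangle$ for each $i = 1, 2, 3$.

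For $s = 2$, set $X_i := {\rm {\bf a}}^-_i - u_i {\rm {\bf a}}^+_i$, so $X_i |\chi_2(x,y)\rangle = 0$. Because the adjoint action of $\mathscr{R}$ on the bare ${\rm {\bf a}}^\pm_i$ at $i = 1, 3$ is not supplied by \eqref{themap}, I would begin from
\[
\mathscr{R}\,({\rm {\bf k}}_2 X_1)|\chi_2(x,y)\rangle = \mathscr{R}\, X_2 |\chi_2(x,y)\rangle = \mathscr{R}\,({\rm {\bf k}}_2 X_3)|\chi_2(x,y)\rangle = 0.
\]
Conjugating by $\mathscr{R}$ via \eqref{themap}, and invoking $u_2 = u_1 u_3$ together with the algebraic factorizations
\begin{align*}
{\rm {\bf a}}^-_1 {\rm {\bf a}}^-_3 - u_1 u_3 \, {\rm {\bf a}}^+_1 {\rm {\bf a}}^+_3 &= X_1 {\rm {\bf a}}^-_3 + u_1 {\rm {\bf a}}^+_1 X_3, \\
{\rm {\bf a}}^-_2 {\rm {\bf a}}^+_3 - u_1 {\rm {\bf a}}^+_2 {\rm {\bf a}}^-_3 &= X_2 {\rm {\bf a}}^+_3 - u_1 {\rm {\bf a}}^+_2 X_3, \\
{\rm {\bf a}}^+_1 {\rm {\bf a}}^-_2 - u_3 \, {\rm {\bf a}}^-_1 {\rm {\bf a}}^+_2 &= {\rm {\bf a}}^+_1 X_2 - u_3 X_1 {\rm {\bf a}}^+_2,
\end{align*}
the three resulting operators become explicit linear combinations of $X_1, X_2, X_3$ with coefficients in ${\rm {\bf a}}^\pm_i$ and ${\rm {\bf k}}_i$. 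Applying these to $|\Psi\rangle$ and writing $|\alpha_i\rangle := X_i |\Psi\rangle$ produces a $3 \times 3$ operator-valued linear system in $|\alpha_1\rangle, |\alpha_2\rangle, |\alpha_3\rangle$.

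The main technical step --- and the principal obstacle --- is to conclude $|\alpha_i\rangle = 0$ from this system. For this, I would use that $\mathscr{R}$ commutes with ${\rm {\bf k}}_1 {\rm {\bf k}}_2$ and ${\rm {\bf k}}_2 {\rm {\bf k}}_3$ by \eqref{themap}, which splits $F\otimes F\otimes F$ into finite-dimensional joint eigensectors of fixed $m_1 + m_2$ and $m_2 + m_3$; within each sector the system becomes a bona fide finite linear problem. An induction on the total oscillator level, based at the vacuum where the matched leading term of $|\Psi\rangle$ rules out any non-zero $|\alpha_i\rangle$, then forces $|\alpha_i\rangle = 0$ throughout. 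Combined with the matched normalization, the uniqueness in \eqref{a:ch2} yields $|\Psi\rangle = |\chi_2(x,y)\rangle$. The case $s = 1$ is entirely parallel, using the annihilators $u_i^{-1}{\rm {\bf a}}^-_i + p u_i {\rm {\bf a}}^+_i - 1 - p$ from \eqref{a:ch13}; this form is preferred over \eqref{a:ch11} or \eqref{a:ch12} precisely because it contains no bare ${\rm {\bf k}}_i$ at the flanking sites whose $\mathscr{R}$-conjugation is not directly available from \eqref{themap}, so the same scheme of multiplying by ${\rm {\bf k}}_2$, conjugating, and extracting a three-by-three operator system goes through verbatim.
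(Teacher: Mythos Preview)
Your overall strategy---use the conjugation formulas \eqref{themap} together with the annihilator characterizations \eqref{a:ch11}--\eqref{a:ch2}---is the right one, but you have run the argument in the harder direction and thereby created a coupled system that you do not actually solve.  The paper avoids this entirely by a simple reversal.

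The point is which vector you hit with the conjugated operator.  You start from the trivial fact $({\rm {\bf k}}_2 X_i)\,|\chi_s\rangle=0$ and rewrite it as $[\mathscr{R}({\rm {\bf k}}_2 X_i)\mathscr{R}^{-1}]\,|\Psi\rangle=0$; this gives a $3\times 3$ operator system in the unknowns $|\alpha_j\rangle=X_j|\Psi\rangle$.  The paper instead notes that, since $\mathscr{R}{\rm {\bf k}}_2$ is invertible and $\mathscr{R}^2=1$, the desired relation $X_i|\Psi\rangle=0$ is equivalent to
\[
[\mathscr{R}({\rm {\bf k}}_2 X_i)\mathscr{R}^{-1}]\,|\chi_s\rangle=0.
\]
This is the \emph{same} conjugated operator, but applied to $|\chi_s\rangle$, where all the $X_j$ are already known to vanish.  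Your own factorizations (e.g.\ ${\rm {\bf a}}^-_2{\rm {\bf a}}^+_3-u_1{\rm {\bf a}}^+_2{\rm {\bf a}}^-_3=X_2{\rm {\bf a}}^+_3-u_1{\rm {\bf a}}^+_2 X_3$) then kill the right-hand side immediately.  The three cases $i=1,2,3$ decouple and there is no system to solve, no induction.

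Your inductive step is the genuine gap.  The operators $X_i$ do not preserve any single weight sector, the coefficient matrix of your system mixes levels through ${\rm {\bf a}}^\pm$, and the base-case assertion that ``the matched leading term of $|\Psi\rangle$ rules out any non-zero $|\alpha_i\rangle$'' is false as stated (already $X_i|0,0,0\rangle\neq 0$).  At minimum a careful filtration argument would be needed, and you have not supplied one.  For $s=1$ the paper's direct route is a bit more intricate---it first settles $i=2$ via \eqref{a:ch13}, then uses the equivalent form \eqref{a:ch11} at $i=2$ to extract an auxiliary identity involving ${\rm {\bf k}}'_2=\mathscr{R}{\rm {\bf k}}_2\mathscr{R}$, which in turn closes $i=1,3$---but it remains a straight verification on $|\chi_1\rangle$ with no system to invert.
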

\begin{proof}
We shall only treat the first relation. The second one is similarly
derived.
It is easy to see that
${\mathscr R}|\chi_s(x,y)\rangle$ satisfies the same normalization
condition as $|\chi_s(x,y)\rangle$ mentioned after
(\ref{a:ch2}).

\smallskip
$\bullet$ {\it Proof of
${\mathscr R}|\chi_2(x,y)\rangle  = |\chi_2(x,y)\rangle$}.
It then suffices to check
that ${\mathscr R}|\chi_2(x,y)\rangle$ also
fulfills (\ref{a:ch2}).
The $i=1$ case is shown by multiplying the invertible
element ${\mathscr R}\,{\rm {\bf k}}_2$ as
\begin{align*}
&{\mathscr R}\,{\rm {\bf k}}_2({\rm {\bf a}}^-_1-x{\rm {\bf a}}^+_1){\mathscr R}|\chi_2(x,y)\rangle \\
&\overset{(\ref{themap})}{=}
\bigl(({\rm {\bf k}}_3{\rm {\bf a}}^-_1+{\rm {\bf k}}_3{\rm {\bf a}}^-_2{\rm {\bf a}}^+_3)
-x({\rm {\bf k}}_3{\rm {\bf a}}^+_1+{\rm {\bf k}}_3{\rm {\bf a}}^+_2{\rm {\bf a}}^-_3)\bigr)|\chi_2(x,y)\rangle \\
&\overset{\phantom{(\ref{themap})}}{=}
{\rm {\bf k}}_3({\rm {\bf a}}^-_1-x{\rm {\bf a}}^+_1)|\chi_2(x,y)\rangle
+ {\rm {\bf k}}_3({\rm {\bf a}}^-_2{\rm {\bf a}}^+_3-x{\rm {\bf a}}^+_2{\rm {\bf a}}^-_3)|\chi_2(x,y)\rangle
\overset{(\ref{a:ch2})}{=} 0.
\end{align*}
The cases $i=2,3$ in (\ref{a:ch2}) can be checked
in the same manner.

\smallskip
$\bullet$
{\it Proof of ${\mathscr R}|\chi_1(x,y)\rangle  = |\chi_1(x,y)\rangle$}.
First we show $(\ref{a:ch13})_{i=2}$, i.e.,
\begin{equation*}
\bigl((xy)^{-1}{\rm {\bf a}}^-_2+pxy{\rm {\bf a}}^+_2-1-p\bigr)
{\mathscr R}|\chi_1(x,y)\rangle =0.
\end{equation*}
By multiplying ${\mathscr R}$ and applying
(\ref{themap}), the LHS becomes
\begin{align*}
\bigl((xy)^{-1}({\rm {\bf a}}^-_1{\rm {\bf a}}^-_3-{\rm {\bf k}}_1{\rm {\bf k}}_3{\rm {\bf a}}^-_2)+pxy
({\rm {\bf a}}^+_1{\rm {\bf a}}^+_3-{\rm {\bf k}}_1{\rm {\bf k}}_3{\rm {\bf a}}^+_2)-1-p\bigr) |\chi_1(x,y)\rangle.
\end{align*}
Eliminate ${\rm {\bf a}}^-_i$ by (\ref{a:ch13}).
The result reads
\begin{equation*}
(1+p)\bigl(p(1-x{\rm {\bf a}}^+_1)(1-y{\rm {\bf a}}^+_3)-{\rm {\bf k}}_1{\rm {\bf k}}_3\bigr)
|\chi_1(x,y)\rangle.
\end{equation*}
This indeed vanishes due to (\ref{a:ch11}).
It follows that another characterizing property
$(\ref{a:ch11})_{i=2}$, i.e.,
$\bigl(xy{\rm {\bf a}}^+_2-(1-p^{-\frac{1}{2}}{\rm {\bf k}}_2)\bigr)
{\mathscr R}|\chi_1(x,y)\rangle=0$ also holds.
Multiplying it with ${\mathscr R}$ and using (\ref{themap}) again,
we get
$\bigl(xy({\rm {\bf a}}^+_1{\rm {\bf a}}^+_3-
{\rm {\bf k}}_1{\rm {\bf k}}_3{\rm {\bf a}}^+_2)-1
+p^{-\frac{1}{2}}{\rm {\bf k}}'_2\bigr)
|\chi_1(x,y)\rangle=0$,
where ${\rm {\bf k}}'_2={\mathscr R}\, {\rm {\bf k}}_2 {\mathscr R}$.
Eliminating ${\rm {\bf a}}^+_i$ by (\ref{a:ch11}),
this is rewritten as
\begin{equation}\label{a:kkk}
p^{-\frac{1}{2}}\bigl(-{\rm {\bf k}}_1
-{\rm {\bf k}}_3
-(p^{\frac{1}{2}} - p^{-\frac{1}{2}}){\rm {\bf k}}_1{\rm {\bf k}}_3
+{\rm {\bf k}}_1{\rm {\bf k}}_2{\rm {\bf k}}_3+{\rm {\bf k}}'_2)|\chi_1(x,y)\rangle=0.
\end{equation}
Now we can derive the remaining
relations $(\ref{a:ch11})_{i=1,3}$ for ${\mathscr R}|\chi_1(x,y)\rangle$.
For example in the $i=1$ case,
multiplication of ${\mathscr R} \,{\rm {\bf k}}_2$ and application of (\ref{themap})
amount to showing
\begin{equation*}
\bigl(x({\rm {\bf k}}_3{\rm {\bf a}}^+_1
+{\rm {\bf k}}_1{\rm {\bf a}}^+_2{\rm {\bf a}}^-_3)
-{\rm {\bf k}}'_2+p^{-\frac{1}{2}}{\rm {\bf k}}_1{\rm {\bf k}}_2
\bigr)|\chi_1(x,y)\rangle=0.
\end{equation*}
Rewriting ${\rm {\bf a}}^{\pm}_i$ in terms of ${\rm {\bf k}}_i$
by (\ref{a:ch11}) and (\ref{a:ch12}),
one finds the resulting vector is proportional
to the LHS of (\ref{a:kkk}) hence zero.
The equality $(\ref{a:ch11})_{i=3}$ can be confirmed in the same way.
\end{proof}

\section{2d reduction of 3d $L$ operator}\label{a:sec:r3d}
We are ready to construct $R$ matrices satisfying the Yang-Baxter equation
following the prescription (\ref{a:xLx}).
Note that the algebra ${\mathcal A}$ is naturally decomposed into
the direct sum
\begin{equation}\label{a:deco}
{\mathcal A} =
{\mathcal A} _{++}\oplus
{\mathcal A} _{+-}\oplus
{\mathcal A} _{-+}\oplus
{\mathcal A} _{--},
\end{equation}
where ${\mathcal A}_{\varepsilon_1, \varepsilon_2}$
is the joint eigenspace of the involutive
automorphism $\sigma_1, \sigma_2$ of  ${\mathcal A}$:
\begin{align}
&{\mathcal A}_{\varepsilon_1, \varepsilon_2}
=\{ x \in {\mathcal A}\mid \sigma_i(x) = \varepsilon_i x\; (i=1,2)\},
\label{a:aee}\\
&\sigma_1({\rm {\bf a}}^\pm)=-{\rm {\bf a}}^\pm, \;
\sigma_1({\rm {\bf k}})={\rm {\bf k}},\quad
\sigma_2({\rm {\bf a}}^\pm)={\rm {\bf a}}^\pm, \;
\sigma_2({\rm {\bf k}})=-{\rm {\bf k}}.
\label{a:sig12}
\end{align}
Using the vectors (\ref{a:chi1}) and (\ref{a:chi2}),
we introduce the linear forms
$\langle \phantom{O} \rangle_{11},\,
\langle \phantom{O} \rangle_{21}$ and
$\langle \phantom{O} \rangle_{22}$
on ${\mathcal A}$ by
\begin{align}
&\langle {\mathcal O}\rangle_{s1}
=\frac{\langle \overline{\chi}_s(x) |
{\mathcal O}|\chi_1(1)\rangle}{\langle \overline{\chi}_s(x)|\chi_1(1)\rangle}
\quad (s=1,2,\,
{\mathcal O} \in {\mathcal A}),
\label{a:bracket}\\
&\langle {\mathcal O}\rangle_{2 2}
=\frac{\langle \overline{\chi}_2(x) |
{\mathcal O}|\chi_2(1)\rangle}
{\langle \overline{\chi}_2(x)|(-i{\rm {\bf k}})^{(1\mp1)/2}|\chi_2(1)\rangle}
\quad (
{\mathcal O} \in {\mathcal A}_{+\pm}\oplus {\mathcal A}_{-\pm}),
\label{a:bracket2}
\end{align}
where
$\langle {\mathcal O}\rangle_{2 2} = 0$
for ${\mathcal O} \in {\mathcal A}_{-\pm}$.
The denominators are simple factors as
\begin{align}
&\langle \overline{\chi}_s(x)|\chi_1(1)\rangle
= \frac{(-px; p^s)_\infty}{(x; p^s)_\infty}\langle 0 | 0 \rangle\quad (s=1,2),
\label{a:norm}\\
&\langle \overline{\chi}_2(x)|\chi_2(1)\rangle =
ip^{-\frac{1}{2}}
\langle \overline{\chi}_2(xp^{-2})|(-i{\rm {\bf k}})|\chi_2(1)\rangle
=\frac{(px^2; p^4)_\infty}{(x; p^4)_\infty}\langle 0 | 0 \rangle.
\end{align}
The linear forms are evaluated explicitly
by means of the standard formulas in $q$-analysis \cite{An}
like the $q$-binomial expansion and
\begin{equation*}
\sum_{j=0}^\infty\frac{(x; p)_j}{(p; p)_j}z^j =
\frac{(xz; p)_\infty}{(z; p)_\infty}.
\end{equation*}
The results are summarized in
\begin{proposition}\label{a:pr:exp}
For $j, m \in \Z_{\ge 0}$, the following formulas are valid:
\begin{equation}
\langle ({\rm {\bf a}}^\pm)^j{\rm {\bf k}}^m \rangle_{11}
= x^{\frac{1\pm 1}{2}j}p^{\frac{m}{2}+\frac{1\mp 1}{2}mj}
\frac{(-p; p)_j(x; p)_m}{(-px; p)_{j+m}}.
\end{equation}
\begin{equation}
\langle ({\rm {\bf a}}^\pm)^j{\rm {\bf k}}^m \rangle_{21}
= p^{\frac{m}{2}\mp jm}\frac{(x; p^2)_m(p;p)_j}{(-px; p^2)_{j+m}}
\sum_{i=0}^j(\mp 1)^ip^{\frac{i}{2}(i-(1\pm 1)j+1)}
\frac{(p^{2m}x; p^2)_i(-p^{2m+2i+1}x;p^2)_{j-i}}
{(p;p)_i(p;p)_{j-i}}.
\end{equation}
\begin{equation}
\begin{split}
&\langle ({\rm {\bf a}}^\pm)^{2j}{\rm {\bf k}}^{2m}\rangle_{22}
=x^{\frac{1\pm 1}{2}j}p^{m+(2\mp 2)mj}
\sum_{i=0}^j
\frac{(-1)^ip^{2i^2}(p^4;p^4)_j(x;p^4)_{m+i}}
{(p^4;p^4)_i(p^4;p^4)_{j-i}(xp^2;p^4)_{m+i}},\\
&\langle ({\rm {\bf a}}^\pm)^{2j}{\rm {\bf k}}^{2m+1}\rangle_{22}
=ix^{\frac{1\pm 1}{2}j}p^{m+(1\mp 1)(2m+1)j}
\sum_{i=0}^j
\frac{(-1)^ip^{2i^2}(p^4;p^4)_j(xp^2;p^4)_{m+i}}
{(p^4;p^4)_i(p^4;p^4)_{j-i}(xp^4;p^4)_{m+i}}.
\end{split}
\end{equation}
\end{proposition}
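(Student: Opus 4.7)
The plan is to reduce each of the three linear forms to a one-dimensional sum in the Fock basis and then evaluate it by the $q$-binomial theorem (or a close cousin of it). All the necessary input is contained in the definitions (\ref{a:chi1})--(\ref{a:chi2}), the Fock action (\ref{a:frep}), and the normalizers (\ref{a:norm}).

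First I would expand the boundary vectors via the $q$-exponential identity $(z;p)_\infty^{-1}=\sum_{n\geq 0}z^n/(p;p)_n$. Using $({\rm {\bf a}}^+)^n|0\rangle = \sqrt{(p^2;p^2)_n}\,|n\rangle$ from (\ref{a:frep}) this gives
\begin{equation*}
|\chi_1(x)\rangle=\sum_{n\geq 0}\frac{x^n\sqrt{(p^2;p^2)_n}}{(p;p)_n}|n\rangle,\qquad |\chi_2(x)\rangle=\sum_{n\geq 0}\frac{x^n\sqrt{(p^2;p^2)_{2n}}}{(p^4;p^4)_n}|2n\rangle,
\end{equation*}
and parallel expansions for $\langle\overline{\chi}_s(x)|$. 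From (\ref{a:frep}) one also has $({\rm {\bf a}}^+)^j{\rm {\bf k}}^m|n\rangle=p^{m(n+1/2)}\sqrt{(p^{2n+2};p^2)_j}\,|n+j\rangle$ and its ${\rm {\bf a}}^-$ analogue. Substituting these into the numerators of (\ref{a:bracket})--(\ref{a:bracket2}) and using orthonormality of the Fock basis, the square roots pair into rational $q$-Pochhammer ratios and each matrix element collapses to a single sum in one Fock index. Dividing by the denominators, which by (\ref{a:norm}) are themselves $q$-binomial products, sets up the three cases.

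For $(s,t)=(1,1)$ the resulting series has the shape $\sum_n(a;p)_n z^n/(p;p)_n$, summed by the $q$-binomial theorem to $(az;p)_\infty/(z;p)_\infty$; after cancelling against (\ref{a:norm}) only a finite product survives, yielding the displayed formula. For $(s,t)=(2,2)$ the bra and ket are both supported on even Fock indices, and the shift by $2j$ forces the surviving sum to terminate at $j+1$ terms, matching the stated expression termwise after a routine index shift. For $(s,t)=(2,1)$ exactly one of the two implicit summations is performed by the $q$-binomial theorem, leaving the advertised terminating finite sum over $i=0,\ldots,j$.

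The main obstacle will be the mixed case $(s,t)=(2,1)$. There the bra lives on even indices in steps of $p^2$ while the ket lives on all indices in steps of $p$, so the direct Fock computation produces a sum whose shape differs from the stated one: the proposition carries the nontrivial factor $(\mp1)^i p^{i(i-(1\pm 1)j+1)/2}$ and the split Pochhammer symbols $(p^{2m}x;p^2)_i(-p^{2m+2i+1}x;p^2)_{j-i}$, features characteristic of a terminating ${}_3\phi_2$ obtained by a $q$-Pfaff--Saalsch\"utz or $q$-Chu--Vandermonde rearrangement. Matching the two expressions requires tracking in parallel the two bases $p$ and $p^2$, the signs coming from the $-i{\rm {\bf k}}$-type contributions, and the precise powers of $p$ absorbed into each Pochhammer argument; this bookkeeping is the delicate part of the proof, but it is entirely mechanical once the correct transformation of summations has been identified.
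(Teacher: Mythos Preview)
Your approach is correct and coincides with the paper's: the paper does not give a detailed proof of this proposition but merely states that the linear forms are evaluated by standard $q$-analysis formulas such as the $q$-binomial expansion and the identity $\sum_{j\ge 0}(x;p)_j z^j/(p;p)_j=(xz;p)_\infty/(z;p)_\infty$. Your Fock-space expansion of $|\chi_s\rangle$ and $\langle\overline{\chi}_s|$ followed by the $q$-binomial theorem is exactly what is intended; the extra discussion of the $(2,1)$ case and the terminating ${}_3\phi_2$ rearrangement goes beyond the level of detail the paper provides.
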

These formulas are useful for checks.
However, our proof of the main Theorem \ref{a:th:main}
does not rely on them.
We take $\boldsymbol{V}$ according to (\ref{a:choice})
and describe its decomposition
by introducing the base vectors as
\begin{align}
\boldsymbol{V}& =
\bigoplus_{{ \boldsymbol \alpha }=(\alpha_1,\ldots, \alpha_n) \in \{0,1\}^n}
 \C v_{ \boldsymbol \alpha },
 \label{a:V}\\
 \boldsymbol{V}& = \boldsymbol{V}_+ \oplus \boldsymbol{V}_- ,\qquad
 \boldsymbol{V}_\pm =  \bigoplus_{(-1)^{\alpha_1+\cdots + \alpha_n} = \pm 1}
 \C v_{ \boldsymbol \alpha }.
 \label{a:Vpm}
 \end{align}
Let $x$ be an indeterminate (spectral parameter).
The prescription (\ref{a:xLx})  in which the
$n$-layer $L$ operator (\ref{a:cL})  is built from the basic one
in  (\ref{a:Lmat}) leads to the following map
${\mathscr R}^{s, t}(x):
\boldsymbol{V} \otimes \boldsymbol{V}
\rightarrow  \boldsymbol{V} \otimes \boldsymbol{V}$
for $(s,t)=(1,1), (2,1)$ and $(2,2)$:
\begin{align}
&{\mathscr R}^{s, t}(x): \;\;
v_{ \boldsymbol \alpha }\otimes v_{\boldsymbol \beta} \mapsto
\sum_{{ \boldsymbol \alpha }', {\boldsymbol \beta}'\in \{0,1\}^n}
W_{st}\!\left(x\left|
{{ \boldsymbol \alpha }'  \;{\boldsymbol \beta}'
\atop \!{ \boldsymbol \alpha }\;\,{\boldsymbol \beta}}\right)\right.
v_{{\boldsymbol \alpha}'}\otimes v_{{ \boldsymbol \beta }'},
\label{a:Rmap}\\
&
W_{st}\!\left(x\left|
{{ \boldsymbol \alpha }'  \;{\boldsymbol \beta}'
\atop \!{ \boldsymbol \alpha }\;\,{\boldsymbol \beta}}\right)\right.
= \left\langle {\mathscr L} (\alpha'_1, \beta'_1|\alpha_1,\beta_1)
\cdots
{\mathscr L} (\alpha'_n, \beta'_n|\alpha_n,\beta_n)
\right\rangle_{st},
\label{a:wst}
\end{align}
where ${ \boldsymbol \alpha }=(\alpha_1,\ldots, \alpha_n)$,
${\boldsymbol \beta}=(\beta_1,\ldots, \beta_n)$, etc.
See Figure \ref{fig:W}.

\begin{figure}[ht]
\setlength{\unitlength}{0.20mm}
\begin{picture}(430,220)
\put(15,10){
\begin{picture}(400,200)
%
\path(40,20)(220,110)
\path(280,140)(360,180)
\dashline[0]{5}(220,110)(280,140)
\path(100,0)(100,100)\path(50,50)(150,50)
\path(200,60)(200,140)\path(160,100)(240,100)
\path(300,120)(300,180)\path(270,150)(330,150)
\put(-10,15){\scriptsize $\langle\bar{\chi}_s(x)|$}
\put(365,182){\scriptsize $|\chi_t(1)\rangle$}
\put(92,110){\scriptsize $\alpha_1^{}$}\put(92,-15){\scriptsize $\alpha_1'$}
\put(192,150){\scriptsize $\alpha_2^{}$}\put(192,45){\scriptsize $\alpha_2'$}
\put(292,190){\scriptsize $\alpha_n^{}$}\put(292,105){\scriptsize $\alpha_n'$}
\put(152,46){\scriptsize $\beta_1^{}$}\put(33,46){\scriptsize $\beta_1'$}
\put(242,96){\scriptsize $\beta_2^{}$}\put(143,96){\scriptsize $\beta_2'$}
\put(332,146){\scriptsize $\beta_n^{}$}\put(253,146){\scriptsize $\beta_n'$}
\end{picture}}
\end{picture}
\caption{A pictorial representation of
$W_{st}\!\left(x\left|
{{ \boldsymbol \alpha }'  \;{\boldsymbol \beta}'
\atop \!{ \boldsymbol \alpha }\;\,{\boldsymbol \beta}}\right)\right.$ in 
(\ref{a:wst}).}
\label{fig:W}
\end{figure}

We remark that the construction (\ref{a:wst}) takes
a matrix product ansatz form for a spin chain whose local states
range over $V\otimes V$.
The matrix element \eqref{a:wst} depends on $x$ through
$\langle \overline{\chi}_s(x)| $ in the definitions \eqref{a:bracket} and
\eqref{a:bracket2}.
It is a rational function of $p^{\frac{1}{2}}$ and $x$
which is normalized to be $1$
for $({\boldsymbol \alpha }',  {\boldsymbol \beta}' )=
({\boldsymbol \alpha}, {\boldsymbol \beta})$ with
${\boldsymbol \alpha} = {\boldsymbol \beta}$.
For $(s,t)=(2,2)$, it also equals $1$
when
$({\boldsymbol \alpha }',  {\boldsymbol \beta}' )=
({\boldsymbol \alpha},  {\boldsymbol \beta})$
with
${\boldsymbol \alpha }-{\boldsymbol \beta}=(0,\ldots,0,\pm1)$
for which the corresponding product of ${\mathscr L} $'s in the bracket
is $-i{\rm {\bf k}} \in {\mathcal A}_{+-}$.

By the construction the decomposition
\begin{equation}
{\mathscr R}^{2,2}(x) =
{\mathscr R}^{2,2}_{+,+}(x)\oplus
{\mathscr R}^{2,2}_{+-}(x) \oplus
{\mathscr R}^{2,2}_{-+} (x)\oplus
{\mathscr R}^{2,2}_{--}(x)
\end{equation}
holds,
where ${\mathscr R}^{2,2}_{\varepsilon, \varepsilon'} (x):
 \boldsymbol{V}_{\varepsilon} \otimes  \boldsymbol{V}_{\varepsilon'}
\rightarrow
 \boldsymbol{V}_{\varepsilon} \otimes  \boldsymbol{V}_{\varepsilon'}$.
{}As explained in Section \ref{sec:teybe}, the $R$ matrix
${\mathscr R}(x)={\mathscr R}^{s, t}(x)$ satisfies the
Yang-Baxter equation (\ref{YBE}).
Another version
$\check{\mathscr R}^{s,t}(x)$ in (\ref{a:Rc})
is described as (\ref{a:Rmap}) by replacing
$v_{{\boldsymbol \alpha}'}\otimes v_{{ \boldsymbol \beta}'}$
with
$v_{{\boldsymbol \beta}'}\otimes v_{{ \boldsymbol \alpha}'}$
in the RHS.

\begin{example}
The image of the vector $v_{0,0,0}\otimes v_{1,1,0}$
is calculated as
\begin{equation*}
\begin{split}
{\mathscr R}^{s, t}(x)(v_{0,0,0}\otimes v_{1,1,0})=
&\langle ({\rm {\bf a}}^+)^2 \rangle_{st}v_{1,1,0}\otimes v_{0,0,0}+
\langle {\rm {\bf a}}^+(-i{\rm {\bf k}}) \rangle_{st}v_{1,0,0}\otimes v_{0,1,0}\\
+&\langle (-i{\rm {\bf k}}){\rm {\bf a}}^+ \rangle_{st}v_{0,1,0}\otimes v_{1,0,0}+
\langle (-i{\rm {\bf k}})^2 \rangle_{st}v_{0,0,0}\otimes v_{1,1,0}.
\end{split}
\end{equation*}
The matrix elements
are evaluated by using Proposition \ref{a:pr:exp}.
We list the result in the table.
($\langle {\rm {\bf k}} \,{\rm {\bf a}}^+ \rangle_{st}$
is equal to $p \langle {\rm {\bf a}}^+ {\rm {\bf k}} \rangle_{st}$.)

\begin{table}[ht]
\begin{center}
\begin{tabular}{c|c|c|c}
\hfill
& $({\rm {\bf a}}^+)^2$
& ${\rm {\bf a}}^+{\rm {\bf k}}$
& ${\rm {\bf k}}^2$ \\
\hline
&&\vspace{-0.3cm}\\
$\langle \;\;\rangle_{21}$
& \LARGE{$\frac{x(1+p)(1-p+px+p^3x)}{(1+px)(1+p^3x)}$}
& \LARGE{$\frac{x(1-x)p^{\frac{3}{2}}(1+p)}{(1+px)(1+p^3x)}$}
& \LARGE{$\frac{(1-x)p(1-p^2x)}{(1+px)(1+p^3x)}$}
\\

\hline
&&\vspace{-0.3cm}\\
$\langle \;\;\rangle_{11}$
& \LARGE{$\frac{x^2(1+p)(1+p^2)}{(1+px)(1+p^2x)}$}
& \LARGE{$\frac{x(1-x)p^{\frac{1}{2}}(1+p)}{(1+px)(1+p^2x)}$}
& \LARGE{$\frac{(1-x)p(1-px)}{(1+px)(1+p^2x)}$}
\\

\hline
&&\vspace{-0.3cm}\\
$\langle \;\;\rangle_{22}$
& \LARGE{$\frac{x(1-p^2)}{1-p^2x}$}
& \Large{$0$}
& \LARGE{$\frac{(1-x)p}{1-p^2x}$}
\end{tabular}
\end{center}
\end{table}

\end{example}

\section{Quantum $R$ matrices for spin representations}\label{a:sec:qr}

Consider the quantum affine Kac-Moody algebras
$U_q(B^{(1)}_n),
U_q(D^{(2)}_{n+1})$ and $U_q(D^{(1)}_n)$
without the derivation operator \cite{D86,Ji}.
The Dynkin diagrams of
$B^{(1)}_n, D^{(1)}_n$ and $D^{(2)}_{n+1}$
\cite{Kac} are given in Figure \ref{fig:Dynkin}.

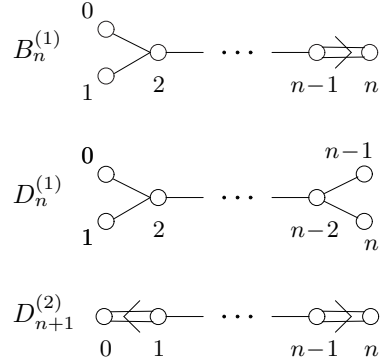
\begin{figure}

\begin{picture}(150,125)(-35,-100)
%
%
\put(0,10){
\put(0.3,8.8){\circle{6}}
\put(0.3,-9){\circle{6}}
\put(20,0){\circle{6}}
\put(80,0){\circle{6}}
\put(100,0){\circle{6}}
\put(45,0){\circle*{1}}
\put(50,0){\circle*{1}}
\put(55,0){\circle*{1}}
\drawline(3,7)(17,0)
\drawline(3.4,-8)(17,0)
\drawline(23,0)(37,0)
\drawline(63,0)(77,0)
\drawline(82,-2)(98,-1.8)
\drawline(82,2)(98,1.8)
\drawline(87,6)(93,0)
\drawline(87,-6)(93,0)
\put(-35,-2){$B^{(1)}_n$}
\put(-9,-18){\small $1$}
\put(-9,13){\small $0$}
\put(18,-15){\small $2$}
\put(70,-15){\small $n\!-\!1$}
\put(98,-15){\small $n$}
}
%
\put(0,-45){
\put(0.3,8.8){\circle{6}}
\put(0.3,-9){\circle{6}}
\put(20,0){\circle{6}}
\put(80,0){\circle{6}}
\put(98,9){\circle{6}}
\put(98,-9){\circle{6}}
\put(45,0){\circle*{1}}
\put(50,0){\circle*{1}}
\put(55,0){\circle*{1}}
\put(-9,-18){\small $1$}
\put(-9,13){\small $0$}
\drawline(3,7)(17,0)
\drawline(3.4,-8)(17,0)
\drawline(23,0)(37,0)
\drawline(63,0)(77,0)
\drawline(82.8,1.5)(95.5,7.8)
\drawline(82.8,-1.5)(95.5,-7.8)
\put(-35,-2){$D^{(1)}_n$}
\put(-9,-18){\small $1$}
\put(-9,13){\small $0$}
\put(18,-15){\small $2$}
\put(83,15){\small $n\!-\!1$}
\put(70,-15){\small $n\!-\!2$}
\put(98,-20){\small $n$}
}
%
\put(0,-90){
\put(-0.4,0){\circle{6}}
\put(20,0){\circle{6}}
\put(80,0){\circle{6}}
\put(100.4,0){\circle{6}}
\put(45,0){\circle*{1}}
\put(50,0){\circle*{1}}
\put(55,0){\circle*{1}}
\drawline(2,-1.8)(18,-2)
\drawline(2,1.8)(18,2)
\drawline(7,0)(13,-6)
\drawline(7,0)(13,6)
\drawline(23,0)(37,0)
\drawline(63,0)(77,0)
\drawline(82,-2)(98,-1.8)
\drawline(82,2)(98,1.8)
\drawline(93,0)(87,-6)
\drawline(93,0)(87,6)
\put(-35,-2){$D^{(2)}_{n+1}$}
\put(-2,-15){\small $0$}
\put(18,-15){\small $1$}
\put(70,-15){\small $n\!-\!1$}
\put(98,-15){\small $n$}
}

\end{picture}
\caption{The Dynkin diagrams for $B^{(1)}_n, D^{(1)}_n, D^{(2)}_{n+1}$
and their enumerations.}
\label{fig:Dynkin}
\end{figure}

Let $\{X^+_i, X^-_i, H_i |0\! \le \!i \le n\}$ be the Chevalley generators of
$U_q(B^{(1)}_n),
U_q(D^{(2)}_{n+1})$ and $U_q(D^{(1)}_n)$.
In $U_q(B^{(1)}_n)$ and $U_q(D^{(2)}_{n+1})$,
the classical part $\{X^+_i, X^-_i, H_i |1 \le i \le n\}$ forms the subalgebra
isomorphic to $U_q(B_n)$.
We endow $\boldsymbol{V}$ \eqref{a:V} with an irreducible action of $U_q(B_n)$.
For $U_q(D^{(1)}_{n})$
the classical part is of course $U_q(D_n)$.
Each of $\boldsymbol{V}_+$ and $\boldsymbol{V}_-$ individually admits
an irreducible action of $U_q(D_n)$.
By further supplementing these classical parts
with the ``0-action" of $X^\pm_0$ and $H_0$,
one gets the spin representations of
$U_q(B^{(1)}_n), U_q(D^{(1)}_{n})$ \cite{O} and
$U_q(D^{(2)}_{n+1})$.

Let us present the concrete formulas for them.
We realize $\boldsymbol{V}$ \eqref{a:V} as in  (\ref{a:choice}).
Thus we set $\boldsymbol{V} = V^{\otimes n}$ with
$V = \C v_0 \oplus \C v_1$ and identify the base
$v_{ \boldsymbol \alpha }$ with
${ \boldsymbol \alpha }=(\alpha_1,\ldots, \alpha_n) \in \{0,1\}^n$
with the tensor product as
\begin{equation}
v_{ \boldsymbol \alpha }=v_{\alpha_1}\otimes \cdots \otimes v_{\alpha_n}.
\end{equation}
Introduce the 2 by 2 matrices $X^+,X^-$ and $H$
acting on $V$ as
\begin{equation}\label{a:XH}
\begin{split}
&X^+v_0=0,  \quad \;X^-v_0 = v_1,  \quad H v_0=\textstyle\frac{1}{2}v_0,
\\
&X^+v_1=v_0,  \quad X^-v_1 = 0, \quad \,H v_1=-\textstyle\frac{1}{2}v_1.
\end{split}
\end{equation}
Then the action of the classical part is given as follows \cite{O}:
\begin{equation}\label{a:xpi}
\begin{split}
X^+_i &= - 1 \otimes \cdots \otimes 1\otimes \overset{i}{X^+} \otimes
\overset{i+1}{X^-} \otimes 1\otimes \cdots \otimes 1\\
H_i &= 1 \otimes \cdots \otimes 1\otimes(\overset{i}{H}\otimes \overset{i+1}{1}
-\overset{i}{1}\otimes \overset{i+1}{H})\otimes 1\otimes \cdots \otimes 1
\end{split}(1 \le i < n),
\end{equation}
which is common to all
the algebras $B^{(1)}_n, D^{(2)}_{n+1}$ and $D^{(1)}_n$.
On the other hand, $i=n$ case reads
\begin{align}
\begin{split}
X^+_n &= \frac{1}{\sqrt{q+q^{-1}}}1\otimes \cdots
\otimes 1 \otimes X^+\\
H_n &= 1 \otimes \cdots \otimes 1 \otimes H
\end{split}  \qquad  \text{for } B^{(1)}_n \text{ and } D^{(2)}_{n+1},
\label{a:xpn}\\
\begin{split}
X^+_n &= -1\otimes \cdots
\otimes 1 \otimes X^+\otimes X^+\\
H_n &= 1 \otimes \cdots \otimes 1\otimes({H}\otimes {1}
+{1}\otimes {H})
\end{split}\qquad  \text{for } D^{(1)}_n.
\label{a:xpn2}
\end{align}
For the algebras $B^{(1)}_n$ and $D^{(1)}_n$, the 0-action is given by
\begin{align}
X^+_0 &= -X^-\otimes X^- \otimes 1 \otimes \cdots \otimes 1,
\label{a:x0b}\\
H_0 &= -(H\otimes 1+1\otimes H)\otimes 1 \otimes \cdots \otimes 1.
\label{a:h0b}
\end{align}
For $D^{(2)}_{n+1}$, it takes the form
\begin{align}
X^+_0 &= -\frac{1}{\sqrt{q+q^{-1}}}X^-\otimes 1 \otimes \cdots \otimes 1,
\label{a:x0td}\\
H_0 &= -H\otimes 1 \otimes \cdots \otimes 1.\label{a:h0td}
\end{align}
In any case $X^-_i$ is given by the transpose ${}^t(X^+_i)$.
We note that the base vector
$v_{ \boldsymbol \alpha }$  of  $\boldsymbol{V}$ \eqref{a:V}  in this paper is
identified with $e_\mu$  in \cite{O} as
$v_{0,0,1} = e_{\frac{1}{2},\frac{1}{2},-\frac{1}{2}}$ etc.

The quantum $R$ matrix for the spin representation is a linear map
$R(x): \boldsymbol{V}\otimes \boldsymbol{V} \rightarrow
\boldsymbol{V} \otimes \boldsymbol{V}$
for $B^{(1)}_n$ and $D^{(2)}_{n+1}$.
Similarly, it is the linear map
$R(x): \boldsymbol{V}_{\varepsilon} \otimes
\boldsymbol{V}_{\varepsilon'} \rightarrow
\boldsymbol{V}_{\varepsilon}  \otimes \boldsymbol{V}_{\varepsilon'} $
for $D^{(1)}_n$ for each pair of $\varepsilon, \varepsilon' \in \{+,-\}$.
The composition
\begin{equation}\label{a:RP}
\check{R}(x) = P\, R(x)\quad (P(u \otimes v) = v \otimes u),
\end{equation}
is also called $R$ matrix. (We use the both in the sequel.)
Up to an overall scalar,
it is characterized by \cite{Ji}
\begin{align}
&[\check{R}(x), \Delta(g)] = 0 \quad \text{for } g =X^\pm_i, H_i\;(1 \le i \le n), \label{a:r1}\\
&\check{R}(x)(q^{H_0}\otimes X^+_0 + x X^+_0\otimes q^{-H_0})
=  (x q^{H_0}\otimes X^+_0 + X^+_0\otimes q^{-H_0})
\check{R}(x),\label{a:r2}
\end{align}
where the coproduct is specified as
\begin{equation}\label{a:Delta}
\Delta(X^{\pm}_i) = q^{H_i}\otimes X^{\pm}_i + X^\pm\otimes q^{-H_i},\quad
\Delta(H_i) = H_i\otimes 1+ 1 \otimes H_i.
\end{equation}
The quantum $R$ matrix $\check{R}(x)$
satisfies the same Yang-Baxter equation \eqref{a:ybe}
as $\check{\mathscr R}(x)$.

\begin{remark}
Our $\check{R}(x)$ is denoted by $R(x^{-1})$ in \cite{O}, which
can be seen by comparing  \eqref{a:r2} here and \cite[eq.(5.3)]{O}.
It is equal to ${\check R}(x^{-1})$ in the notation of \cite{Ji}.
\end{remark}

As mentioned before,
there are four kinds of $R$ matrices
$R(x): \boldsymbol{V}_{\varepsilon} \otimes
\boldsymbol{V}_{\varepsilon'} \rightarrow
\boldsymbol{V}_{\varepsilon}  \otimes \boldsymbol{V}_{\varepsilon'} $
for $D^{(1)}_n$.
We gather them into a single one
($2^{2n}$ by $2^{2n}$ matrix)
that acts on
$\boldsymbol{V} \otimes \boldsymbol{V}$
via \eqref{a:Vpm}, where the components other than
$\boldsymbol{V}_{\varepsilon} \otimes \boldsymbol{V}_{\varepsilon'} \rightarrow
\boldsymbol{V}_{\varepsilon}  \otimes \boldsymbol{V}_{\varepsilon'} $ are
to be understood as $0$.

We fix the normalization of
$R(x)$ by specifying a
particular matrix element as
\begin{equation}
\begin{split}
R(x)&:\; v_{0,\ldots,0}\otimes v_{0,\ldots,0} \mapsto
v_{0,\ldots,0}\otimes v_{0,\ldots,0} + \text{other terms}
\quad (B^{(1)}_n, D^{(2)}_{n+1}),\\
R(x)&:\; v_{0,\ldots,0,\alpha}\otimes v_{0,\ldots,0,\beta} \mapsto
v_{0,\ldots,0,\alpha}\otimes v_{0,\ldots,0,\beta} + \text{other terms}
\quad (D^{(1)}_{n}),
\end{split}
\end{equation}
where $\alpha, \beta \in \{0,1\}$ are arbitrary.
The resulting quantum $R$ matrices will be denoted by
$R_{B^{(1)}_{n}}(x),
R_{D^{(2)}_{n+1}}(x), R_{D^{(1)}_{n}}(x)$.
The ones obtained by them from (\ref{a:RP})
are similarly written as $\check{R}_{B^{(1)}_{n}}(x),
\check{R}_{D^{(2)}_{n+1}}(x),
\check{R}_{D^{(1)}_{n}}(x)$.
They are rational functions of $q$ and $x$,
and admits the spectral decomposition
$\check{R}(x)=\sum_{j=0}^n \rho^{(n)}_j(x)P^{(n)}_j$.
To explain $P^{(n)}_j$, recall the irreducible decomposition
of $U_q(B_n)$-module
\begin{equation*}
\boldsymbol{V} \otimes \boldsymbol{V} =
\boldsymbol{V}(2\Lambda_n) \oplus \boldsymbol{V}(\Lambda_{n-1})
\oplus \cdots \oplus \boldsymbol{V}(\Lambda_1) \oplus
\boldsymbol{V}(0),
\end{equation*}
where $\Lambda_j$ is the fundamental weight
attached to the vertex $j$ in the Dynkin diagram (Figure \ref{fig:Dynkin}),
and $\boldsymbol{V}(\lambda)$ denotes the irreducible
$U_q(B_n)$-module with highest weight $\lambda$.
(In this notation, the spin representation $\boldsymbol{V}$ on the LHS is
$\boldsymbol{V}(\Lambda_n)$. )
The operator $P^{(n)}_j$ in the spectral
decomposition is the orthnormal projector
from $\boldsymbol{V} \otimes \boldsymbol{V}$ to
$\boldsymbol{V} ((1\!+\!\delta_{j0})\Lambda_{n-j})$
where we set $\Lambda_0=0$.
For $B^{(1)}_n$, $P^{(n)}_j$ is described in \cite[Prop.5.1]{O},
which is actually the same also for  $D^{(2)}_{n+1}$ since
the two affine Lie algebras share the common classical part $B_n$.
See Figure \ref{fig:Dynkin}.
Projectors for $D^{(1)}_n$ are also associated with
similar irreducible decompositions of the $U_q(D_n)$-modules
$\boldsymbol{V}_{\varepsilon} \otimes
\boldsymbol{V}_{\varepsilon'}$.
See \cite{O} for the detail.

The eigenvalues for $B^{(1)}_n$ and $D^{(1)}_n$ are
given by
$\rho^{(n)}_j (x)= \tilde{\rho}^{(n)}_j(x^{-1})/\tilde{\rho}^{(n)}_{j_0}(x^{-1})$
where  $\tilde{\rho}^{(n)}_j(x)$ is the one given in \cite[p480, p482]{O}
with $j_0 = 0$ for $B^{(1)}_n$ and
$j_0 = (1-(-1)^j)/2$ for $D^{(1)}_n$.

The eigenvalues for $D^{(2)}_{n+1}$ seem nowhere available in the literature.
Although they are not necessary for the proof of our main Theorem \ref{a:th:main},
we present them for reader's convenience.
They are certainly vital for checks.
\begin{equation}\label{a:eigen}
\rho^{(n)}_j(x) = \prod_{i=1}^j\frac{q^{2i}+(-1)^ix}{q^{2i}x+(-1)^i}.
\end{equation}
We note that there are useful recursion formulas for the
$R$ matrices with respect to the rank $n$.
The one that matches our convention
for $B^{(1)}_n$ is obtained by setting
$x\rightarrow x^{-1}$ in \cite[Appendix]{O}
and for $D^{(1)}_n$  by setting in
$q \rightarrow q^2$ in \cite[sec. 2.5]{Koga}.
See also \cite{Re88} for $q=1$ case.

\section{Main theorem}\label{a:sec:proof}
To relate ${\mathscr R}(x)$
obtained from the 3d $L$ operators (Section \ref{a:sec:r3d}) and
$R(x)$ originating in the quantum affine algebras  (Section \ref{a:sec:qr}),
we adjust the parameter $p$ in the oscillator algebra ${\mathcal A}$
and $q$ in the quantum group $U_q$ by
\begin{equation}\label{a:pq}
p^{\frac{1}{2}} = i q.
\end{equation}
Now we state the main result of the paper.
\begin{theorem}\label{a:th:main}
With the identification \eqref{a:pq}, the following equalities are valid:
\begin{align}
&{\mathscr R}^{2,1}(x) = R_{B^{(1)}_n}(x),
\label{a:rb}\\
&{\mathscr R}^{1,1}(x) = R_{D^{(2)}_{n+1}}(x),
\label{a:rd2}\\
&{\mathscr R}^{2,2}(x) = R_{D^{(1)}_{n}}(x).
\label{a:rd}
\end{align}
\end{theorem}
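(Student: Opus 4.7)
The strategy is to invoke Jimbo's characterization \cite{Ji}: the operator $\check{R}(x)$ is determined up to an overall scalar by the intertwining relations (\ref{a:r1})--(\ref{a:r2}). It therefore suffices to verify these relations for $\check{\mathscr R}^{s,t}(x)$ with respect to the quantum affine algebra dictated by the given $(s,t)$, and then fix the scalar via the normalization condition.

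The Cartan part is handled first. Inspecting (\ref{a:Lmat}) one sees that ${\mathscr L}(\alpha', \beta'|\alpha, \beta) = 0$ unless $\alpha' + \beta' = \alpha + \beta$; hence $W_{st}$ vanishes unless $\alpha'_i + \beta'_i = \alpha_i + \beta_i$ for every $i$. Since $(-1)^{\alpha} + (-1)^{\beta}$ is determined by the sum $\alpha + \beta$, this site-wise preservation forces all $\Delta(H_i)$-eigenvalues (with $0 \le i \le n$, in all three algebras) to be preserved as well, so $[\check{\mathscr R}^{s,t}(x), \Delta(H_i)] = 0$ holds automatically. The classical step generators $X^\pm_i$ with $1 \le i < n$ have the common form (\ref{a:xpi}), acting on two consecutive sites. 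Their commutativity with $\check{\mathscr R}^{s,t}(x)$ reduces to a purely local intertwining identity for the product ${\mathscr L}_{a_i, b_i}[\mathcal{A}] \cdot {\mathscr L}_{a_{i+1}, b_{i+1}}[\mathcal{A}]$ on $V^{\otimes 2} \otimes V^{\otimes 2}$, which follows from the free-fermion oscillator relations (\ref{a:kaa}) alone and is independent of the boundary vectors.

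The decisive step is the boundary. For $X^+_n$ one slides the action through ${\mathscr L}_1 \cdots {\mathscr L}_n$ until it meets $|\chi_t(1)\rangle$, at which point the eigenvalue relations (\ref{a:aket1})--(\ref{a:aket3}) (for $t=1$) or (\ref{a:aket2}) (for $t=2$) absorb the boundary excitation into a scalar. The affine relation (\ref{a:r2}) is treated symmetrically using the dual equations (\ref{a:abra1}) and their $s=2$ counterpart, sliding the action of $X^+_0$ and $q^{\pm H_0}$ against $\langle\overline{\chi}_s(x)|$. The $x$-dependence on the two sides of (\ref{a:r2}) emerges precisely because the ${\rm{\bf a}}^\pm$-eigenvalues on $\langle\overline{\chi}_s(x)|$ are linear in $x^{\pm 1}$. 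The normalization is verified by evaluating a distinguished matrix element: ${\mathscr L}(0,0|0,0) = 1$ combined with the vacuum property $\langle 0 | 0 \rangle = 1$ gives $W_{st}(x|0\cdots 0\,0\cdots 0, 0\cdots 0\,0\cdots 0) = 1$, matching the requirement on $v_{0,\ldots,0}\otimes v_{0,\ldots,0}$ (with an analogous check on $v_{0,\ldots,0,\alpha}\otimes v_{0,\ldots,0,\beta}$ for $D^{(1)}_n$).

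The main obstacle is pairing $(s,t)\in\{(2,1),(1,1),(2,2)\}$ with the correct affine algebra. The single-site form of $X^+_n$ in $B^{(1)}_n$ and $D^{(2)}_{n+1}$ must be compatible with the linear ${\rm {\bf a}}^\pm$ relations satisfied by $|\chi_1\rangle$, whereas the two-site form (\ref{a:xpn2}), (\ref{a:x0b}) in $D^{(1)}_n$ must be matched against the quadratic $({\rm {\bf a}}^\pm)^2$ relations satisfied by $|\chi_2\rangle$. The distinction between $B^{(1)}_n$ and $D^{(2)}_{n+1}$, which share the classical part $U_q(B_n)$, is then produced entirely at the left boundary by the difference between $\langle\overline{\chi}_2(x)|$ and $\langle\overline{\chi}_1(x)|$, in harmony with the Dynkin-diagram correspondence of Remark \ref{re:dynkin}.
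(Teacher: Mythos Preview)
Your proposal is correct and follows essentially the same route as the paper: verify Jimbo's characterization \eqref{a:r1}--\eqref{a:r2} for $\check{\mathscr R}^{s,t}(x)$, with the bulk generators $X^\pm_i\ (1\le i<n)$ handled by a local identity in $\mathcal{A}$ coming from \eqref{a:kaa} alone, and the end generators $X^\pm_n$, $X^\pm_0$ handled by the annihilation properties \eqref{a:aket1}--\eqref{a:aket2}, \eqref{a:abra1} of $|\chi_t(1)\rangle$ and $\langle\overline{\chi}_s(x)|$, after which the normalizations match. The paper makes this explicit by writing each commutator's matrix element as $\langle M\,Z\,N\rangle_{st}$ and checking that the residual element $Z\in\mathcal{A}$ either vanishes by \eqref{a:kaa}, \eqref{a:pq} or is killed by the relevant boundary vector; your ``sliding'' language is a loose paraphrase of exactly this mechanism (no actual commutation past intermediate $\mathscr{L}$'s is needed, since $X^\pm_n$ and $X^\pm_0$ already sit at the ends).
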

\begin{remark}\label{re:dynkin}
Comparison of these results with Figure \ref{fig:Dynkin}
suggests the following correspondence between the boundary states
$\langle \overline{\chi}_s(x)|$, $|\chi_t(1)\rangle$ in (\ref{a:xLx})
and the end shape of the Dynkin diagrams:

\begin{picture}(200,100)(-60,-23)

\put(100,50){
\put(-0.4,0){\circle{6}}
\drawline(2,-1.8)(18,-2)
\drawline(2,1.8)(18,2)
\drawline(7,0)(13,-6)
\drawline(7,0)(13,6)
\put(-3,10){\small $0$}
\put(-50,-2){$\langle \overline{\chi}_1(x) |$}
}

\put(100,5){
\put(0.3,8.8){\circle{6}}
\put(0.3,-8.8){\circle{6}}
\drawline(3,7)(17,0)
\drawline(3,-7)(17,0)
\put(-9,-18){\small $1$}
\put(-9,13){\small $0$}
\put(-50,-2){$\langle \overline{\chi}_2(x) |$}
}

\put(90,50){
\drawline(82,-2)(98,-1.8)
\drawline(82,2)(98,1.8)
\drawline(93,0)(87,-6)
\drawline(93,0)(87,6)
\put(100.4,0){\circle{6}}
\put(120,-2){$|\chi_1(1)\rangle$}
\put(98,10){\small $n$}
}

\put(90,5){
\put(98,9){\circle{6}}
\put(98,-9){\circle{6}}

\drawline(82.8,0)(95.5,7.8)
\drawline(82.8,0)(95.5,-7.8)

\put(87,17){\small $n\!-\!1$}
\put(96,-20){\small $n$}
\put(120,-2){$|\chi_2(1)\rangle$}
}

\end{picture}

\noindent
In view of this, we expect that the similarly constructible
${\mathscr R}^{1,2}(x)$ yields the quantum $R$ matrix for
$U_q(B^{(1)}_n)$ corresponding to the realization of
$B^{(1)}_n$ as an affinization of its another classical subalgebra $D_n$.
We remark further that the periodic boundary condition along the
$F$-direction in \cite{BS06} corresponds to the
cyclic Dynkin diagram of the relevant algebra $A^{(1)}_n$
in an analogous way to the above pictures.
\end{remark}
\begin{proof}
In view of the normalizations,
it suffices to show that $\check{\mathscr R}^{s, t}(x)$ (\ref{a:Rc})
satisfies the characterization \eqref{a:r1} and \eqref{a:r2}.
For $g=H_i \,(1\le i \le n)$, eq.~\eqref{a:r1} is checked easily.
Thus our first task is to
show \eqref{a:r1}  for  $g=X^+_i$.
The case $g=X^-_i$ is similar and left as an exercise for the readers.

\smallskip
$\bullet$ {\it Proof of \eqref{a:r1} for $g=X^+_i$ with $1 \le i < n$}.
This case is most generic and relevant to all the algebras
$B^{(1)}_n, D^{(2)}_{n+1}$ and $D^{(1)}_{n}$.
It reads
\begin{equation}\label{a:com1}
\check{\mathscr R}^{s, t}(x)
(q^{H_i}\otimes X^+_i + X^+_i\otimes q^{-H_i})
=  (q^{H_i}\otimes X^+_i + X^+_i\otimes q^{-H_i})
\check{\mathscr R}^{s, t}(x).
\end{equation}
Our proof closes within the algebra ${\mathcal A}$ and is independent of its
representation. It neither concerns the choice of bra and ket vectors in
\eqref{a:bracket} and \eqref{a:bracket2}.
Therefore it applies to all of
$B^{(1)}_n, D^{(2)}_{n+1}$ and $D^{(1)}_{n}$.
To illustrate the idea, we consider the matrix element of
$\check{\mathscr R}^{s, t}(x)(q^{H_i}\otimes X^+_i)$
concerning the transition
$v_{{ \boldsymbol \alpha }}\otimes v_{{\boldsymbol \beta}}
\mapsto v_{{\boldsymbol \beta}'}\otimes v_{{ \boldsymbol \alpha }'}$.
By the action of $q^{H_i}\otimes X^+_i$, the vector
$v_{{ \boldsymbol \alpha }}\otimes v_{{\boldsymbol \beta}} $
firstly becomes
\begin{align}
&(-1)\times(\cdots \otimes
q^Hv_{\alpha_i}\otimes q^{-H}v_{\alpha_{i+1}}\otimes \cdots)
\otimes
(\cdots \otimes X^+v_{\beta_i}\otimes X^-v_{\beta_{i+1}}\otimes \cdots),
\end{align}
where the parts denoted by $\cdots$ are unchanged.
See \eqref{a:xpi}.
Concretely this represents the following vector in
$\boldsymbol{V}\otimes \boldsymbol{V}$:
\begin{equation}
-\delta_{\beta_i 1}\delta_{\beta_{i+1} 0}
q^{(\frac{1}{2}-\alpha_{i})-(\frac{1}{2}-\alpha_{i+1})}
(\cdots \otimes v_{\alpha_i}\otimes v_{\alpha_{i+1}}\otimes \cdots)
\otimes
(\cdots \otimes v_{0}\otimes v_{1}\otimes \cdots).
\end{equation}
See \eqref{a:XH}.
After further applying $\check{\mathscr R}^{s, t}(x)$ to this,
the coefficient of $v_{{\boldsymbol \beta}'}\otimes v_{{ \boldsymbol \alpha }'}$
in the resulting vector is $0$
unless $(\alpha'_j,\beta'_j)  = (\alpha_j, \beta_j)$ for all $j\neq i,i+1$.
If this condition is met,
the matrix element under consideration takes the form
\begin{equation}
\langle M
L(\alpha'_i, \beta'_i| q^H\alpha_i, X^+\beta_i)
L(\alpha'_{i+1}, \beta'_{i+1}| q^{-H}\alpha_{i+1}, X^-\beta_{i+1})N
\rangle_{st}
\end{equation}
for some elements $M, N\in {\mathcal A}$.
Here and in what follows, we employ the slightly abused notation like
\begin{equation}\label{a:abuse}
\begin{split}
&L(\alpha'_i, \beta'_i| q^H\alpha_i, X^+\beta_i)
= q^{\frac{1}{2}-\alpha_i}\delta_{\beta_i 1}
{\mathscr L} (\alpha'_i, \beta'_i| \alpha_i, 0),\\
&L(q^{-H}\alpha'_i, X^-\beta'_i| \alpha_i, \beta_i)
=q^{-\frac{1}{2}+\alpha'_i}\delta_{\beta'_i 0}
{\mathscr L} (\alpha'_i, 1| \alpha_i, \beta_i),
\end{split}
\end{equation}
where ${\mathscr L} (\alpha', \beta'| \alpha, \beta)$
is given by (\ref{a:Lmat}).
(Recall that the action of $\check{\mathscr R}^{s, t}(x)$ is
described by (\ref{a:Rmap}) and (\ref{a:wst}) followed by the transposition $P$
as in (\ref{a:Rc}).)
By similar calculations, one finds that the matrix element of
LHS-RHS of \eqref{a:com1}  concerning
$v_{{ \boldsymbol \alpha }}\otimes v_{{\boldsymbol \beta}}
\mapsto v_{{\boldsymbol \beta}'}\otimes v_{{ \boldsymbol \alpha }'}$
is proportional to
$\langle M Z_i N \rangle_{st}$ with
\begin{equation}\label{a:Lrel1}
\begin{split}
Z_i= &L(\alpha'_i, \beta'_i| q^H\alpha_i, X^+\beta_i)
L(\alpha'_{i+1}, \beta'_{i+1}| q^{-H}\alpha_{i+1}, X^-\beta_{i+1})\\
+
&L(\alpha'_i, \beta'_i| X^+\alpha_i, q^{-H}\beta_i)
L(\alpha'_{i+1}, \beta'_{i+1}| X^-\alpha_{i+1}, q^{H}\beta_{i+1})\\
-
&L(X^-\alpha'_i, q^H\beta'_i| \alpha_i, \beta_i)
L(X^+\alpha'_{i+1}, q^{-H}\beta'_{i+1}|\alpha_{i+1}, \beta_{i+1})\\
-
&L(q^{-H}\alpha'_i, X^-\beta'_i| \alpha_i, \beta_i)
L(q^H\alpha'_{i+1}, X^+\beta'_{i+1}|\alpha_{i+1}, \beta_{i+1})
\in {\mathcal A}.
\end{split}
\end{equation}
The four terms here correspond to those in \eqref{a:com1}.
Note for example in the composition
$(q^{H_i}\otimes X^+_i)\check{\mathscr R}^{s, t}(x)$,
one needs to look at the transition
$\check{\mathscr R}^{s, t}(x):
v_{ \boldsymbol \alpha }\otimes v_{\boldsymbol \beta}
\mapsto
v_{{\boldsymbol \beta}'}\otimes X^-_i v_{{ \boldsymbol \alpha }'}$
in order to finally reach the target vector
$v_{{\boldsymbol \beta}'}\otimes v_{{ \boldsymbol \alpha }'}$
by the subsequent action of
$q^{H_i}\otimes X^+_i$.
The element $Z_i$ can explicitly be written down
for each choice of $\alpha_i, \beta_i,\ldots,\alpha'_{i+1}, \beta'_{i+1}$
by substituting \eqref{a:Lmat} and \eqref{a:abuse}.
There are $2^8$ cases in total, and most of them are identically $0$.
By a direct calculation one can check that
all the nontrivial cases become 0 by using
the relation \eqref{a:kaa} and  \eqref{a:pq}.
In short, our claim is $Z_i = 0$, which is independent of the
representation of ${\mathcal A}$ and also of the
bra and ket vectors in
\eqref{a:bracket} and \eqref{a:bracket2} .

\smallskip
$\bullet$ {\it Proof of \eqref{a:r1} with $g=X^+_n$
for $B^{(1)}_n$ and $D^{(2)}_{n+1}$}.
The relevant $R$ matrices in \eqref{a:rb} and \eqref{a:rd2}
are ${\mathscr R}^{s, t}(x)$ with $t=1$.
Thus we are to show
\begin{equation}\label{a:com2}
\check{\mathscr R}^{s, 1}(x)(q^{H_n}\otimes X^+_n + X^+_n\otimes q^{-H_n})
=  (q^{H_n}\otimes X^+_n + X^+_n\otimes q^{-H_n})\check{\mathscr R}^{s, 1}(x),
\end{equation}
where $X^+_n$ and $H_n$ are specified in \eqref{a:xpn} .
By the same argument as before,
we are to check that
$\langle M Z_n \rangle_{s1}=0$ for any element $M \in {\mathcal A}$,
where $Z_n$ is given by
\begin{equation}\label{a:Lrel2}
\begin{split}
Z_n= &L(\alpha'_n, \beta'_n| q^H\alpha_n, X^+\beta_n)
+
L(\alpha'_n, \beta'_n| X^+\alpha_n, q^{-H}\beta_n)\\
-
&L(X^-\alpha'_n, q^H\beta'_n| \alpha_n \beta_n)
-
L(q^{-H}\alpha'_n, X^-\beta'_n| \alpha_n, \beta_n) \in {\mathcal A}.
\end{split}
\end{equation}
There are $2^4$ $Z_n$'s depending on the choices of
$\alpha_n,\ldots, \beta'_n$.
Writing them out one finds that
they all vanish by virtue of \eqref{a:kaa},
except the two nontrivial cases proportional to
${\rm {\bf a}}^\pm-1-iq^{\mp 1}{\rm {\bf k}}$.
Thus $\langle M Z_n \rangle_{s1}=0$ follows from
\eqref{a:aket1} and \eqref{a:pq}.

\smallskip
$\bullet$ {\it Proof of \eqref{a:r1} with $g=X^+_n$
for $D^{(1)}_n$}.
The relevant $R$ matrix in \eqref{a:rd} is
${\mathscr R}^{2,2}(x)$.
Thus we are to show
\begin{equation}\label{a:com3}
\check{\mathscr R}^{2,2}(x)(q^{H_n}\otimes X^+_n + X^+_n\otimes q^{-H_n})
=  (q^{H_n}\otimes X^+_n + X^+_n\otimes q^{-H_n})\check{\mathscr R}^{2,2}(x),
\end{equation}
where $X^+_n$ and $H_n$ are specified in \eqref{a:xpn2} .
As before we are to show
$\langle M Z'_n \rangle_{22}=0$ for any $M \in {\mathcal A}$, where
\begin{equation}\label{a:znp}
\begin{split}
Z'_n=
&L(\alpha'_{n-1}, \beta'_{n-1}| q^H\alpha_{n-1}, X^+\beta_{n-1})
L(\alpha'_{n}, \beta'_{n}| q^{H}\alpha_{n}, X^+\beta_{n})\\
+
&L(\alpha'_{n-1}, \beta'_{n-1}| X^+\alpha_{n-1}, q^{-H}\beta_{n-1})
L(\alpha'_{n}, \beta'_{n}| X^+\alpha_{n}, q^{-H}\beta_{n})\\
-
&L(X^-\alpha'_{n-1}, q^H\beta'_{n-1}| \alpha_{n-1}, \beta_{n-1})
L(X^-\alpha'_{n}, q^H\beta'_{n}|\alpha_{n}, \beta_{n})\\
-
&L(q^{-H}\alpha'_{n-1}, X^-\beta'_{n-1}| \alpha_{n-1}, \beta_{n-1})
L(q^{-H}\alpha'_{n}, X^-\beta'_{n}|\alpha_{n}, \beta_{n})
\in {\mathcal A}.
\end{split}
\end{equation}
There are $2^8$ $Z'_n$'s depending on $\alpha'_{n-1}, \ldots, \beta_n$.
Due to \eqref{a:kaa} and \eqref{a:pq}, they all vanish
except the two cases proportional to
$1-({\rm {\bf a}}^\pm)^2+q^{\mp 2}{\rm {\bf k}}^2$.
Thus $\langle M Z'_n \rangle_{22}=0$ follows from
\eqref{a:kaa}, \eqref{a:pq} and
the first relation in \eqref{a:aket2} with $x=1$.

\smallskip
The proof of \eqref{a:r1} has been finished.
Next we proceed to \eqref{a:r2}
concerning the 0-action.

\smallskip
$\bullet$ {\it Proof of \eqref{a:r2} for $B^{(1)}_n$ and $D^{(1)}_n$}.
We use \eqref{a:x0b} and \eqref{a:h0b}.
We are to check that
$\langle Z_0 N \rangle_{2t}=0$  for $t=1\, (B^{(1)}_n)$,
$t=2 \,(D^{(1)}_n)$ and
for any $N \in {\mathcal A}$, where
\begin{equation}\label{a:Lrel3}
\begin{split}
Z_0=
&L(\alpha'_1, \beta'_1| q^{-H}\alpha_1, X^-\beta_1)
L(\alpha'_2,\beta'_2|q^{-H}\alpha_2, X^-\beta_2)\\
+
&xL(\alpha'_1, \beta'_1| X^-\alpha_1, q^H\beta_1)
L(\alpha'_2,\beta'_2|X^-\alpha_2, q^H\beta_2)\\
-
&xL(X^+\alpha'_1, q^{-H}\beta'_1| \alpha_1, \beta_1)
L(X^+\alpha'_2, q^{-H}\beta'_2| \alpha_2, \beta_2)\\
-
&L(q^H\alpha'_1, X^+\beta'_1| \alpha_1, \beta_1)
L(q^H\alpha'_2, X^+\beta'_2| \alpha_2, \beta_2)
\in {\mathcal A}.
\end{split}
\end{equation}
There are $2^8$ $Z_0$'s depending on the choices of
$\alpha_1,\ldots, \beta'_2$.
Due to \eqref{a:kaa} and \eqref{a:pq},
they all vanish except the following:
\begin{equation}\label{a:z0}
{\rm {\bf a}}^+-x{\rm {\bf a}}^-,\quad {\rm {\bf a}}^{\pm}{\rm {\bf k}}
+(xq^2)^{\pm 1}{\rm {\bf k}}\,{\rm {\bf a}}^{\mp},
\quad
x^{\mp 1}({\rm {\bf a}}^\pm)^2-1- q^{\pm 2}{\rm {\bf k}}^2.
\end{equation}
The leftmost one makes zero contribution owing to the second relation in
\eqref{a:aket2}.
Therefore as far as the action on $\langle\overline{\chi}_2(x)|$ from the right
is concerned, one can write ${\rm {\bf a}}^+\equiv x {\rm {\bf a}}^-$.
Then the remaining ones in \eqref{a:z0}  are $\equiv$ to
${\rm {\bf a}}^{\pm}{\rm {\bf k}}+q^{\pm 2}{\rm {\bf k}}\,{\rm {\bf a}}^{\pm}$
and
${\rm {\bf a}}^\mp {\rm {\bf a}}^\pm-1- q^{\pm 2}{\rm {\bf k}}^2$.
These combinations are indeed $0$ because of
\eqref{a:kaa} and \eqref{a:pq}.
Note that this arguement holds either for $t=1$ or $t=2$
which concerns the choice of the ket vectors
in \eqref{a:bracket} and \eqref{a:bracket2}.

\smallskip
$\bullet$ {\it Proof of \eqref{a:r2} for $D^{(2)}_{n+1}$}.
We use \eqref{a:x0td} and \eqref{a:h0td}.
We are to check that
$\langle Z'_0 N\rangle_{11}=0$ for any $N \in {\mathcal A}$, where
\begin{equation}\label{a:Lrel4}
\begin{split}
Z'_0=
&L(\alpha'_1, \beta'_1| q^{-H}\alpha_1, X^-\beta_1)
+xL(\alpha'_1, \beta'_1| X^-\alpha_1, q^H\beta_1)\\
-&xL(X^+\alpha'_1, q^{-H}\beta'_1| \alpha_1, \beta_1)
-L(q^H\alpha'_1, X^+\beta'_1| \alpha_1, \beta_1)
\in {\mathcal A}.
\end{split}
\end{equation}
There are $2^4$ $Z'_0$'s depending on the choices of
$\alpha_1,\ldots, \beta'_1$.
Due to \eqref{a:kaa} and \eqref{a:pq},
they all vanish except
${\rm {\bf a}}^\pm - x^{\pm 1}(1+iq^{\pm 1}{\rm {\bf k}})$.
Thus $\langle \overline{\chi}_1(x)|Z'_0=0$ holds
thanks to \eqref{a:abra1}.

\medskip
Our proof of \eqref{a:r2} is finished and thereby
Theorem \ref{a:th:main} is established.
\end{proof}

\appendix
\section{Explicit formula of ${\mathscr R}$}\label{sec:app}
For reader's convenience, we quote from \cite{BMS:2008} the explicit formula for
${\mathscr R}={\mathscr R}_{1,2,3}$ in Theorem \ref{a:th:RLLL}
in a form free from implicit poles.
Define its matrix elements by
\begin{equation}
{\mathscr R}^{n'_1,n'_2,n'_3}_{n_1,n_2,n_3}
=\langle n_1, n_2, n_3 |
 {\mathscr R} |n'_1,n'_2, n'_3\rangle\quad (n_i, n'_i \in \Z_{\ge 0}),
\end{equation}
where $\langle n_1, n_2, n_3 | =\langle n_1| \otimes \langle n_2 | \otimes  \langle n_3 |$
and similarly for $ |n'_1,n'_2, n'_3\rangle$.
Then we have
\begin{align}
{\mathscr R}^{n'_1,n'_2,n'_3}_{n_1,n_2,n_3} &
= \sqrt{\frac{(p^2;p^2)_{n'_1}(p^2;p^2)_{n'_2}(p^2;p^2)_{n'_3}}
{(p^2;p^2)_{n_1}(p^2;p^2)_{n_2}(p^2;p^2)_{n_3}}}\;
\delta_{n_1+n_2, n'_1+n'_2}\delta_{n_2+n_3, n'_2+n'_3}\,
\overline{{\mathscr R}}^{n'_1,n'_2,n'_3}_{n_1,n_2,n_3},\\
\overline{{\mathscr R}}^{n'_1,n'_2,n'_3}_{n_1,n_2,n_3}
&=(-1)^{n'_2}\frac{p^{n_1n_3+n'_2(n_1+n_3+1)}}
{(p^2;p^2)_{n'_2}}\nonumber\\
&\times
\sum_{k=\max(0,n'_2-n_2)}^{\min(n_1,n_3,n'_2)}
\frac{(p^{-2n_1};p^2)_k(p^{-2n_3};p^2)_k(p^{-2n'_2};p^2)_k
(p^{2(n_2-n'_2+k+1)};p^2)_{n'_2-k}p^{2k}}
{(p^2;p^2)_k}.
\end{align}
This formula is equivalent to
\cite[eq.~(59)]{BMS:2008} without
the sign $(-1)^{n_2}$.
Removing it corresponds to setting
$\varepsilon =-1$ in \cite[eq.~(22)]{BMS:2008},
which is necessary 
to match eq. ~(\ref{themap}) in this paper.
The matrix elements enjoy the following symmetry:
\begin{equation*}
{\mathscr R}^{n'_1, n'_2, n'_3}_{n_1, n_2, n_3}  \,=\,
{\mathscr R}^{n_1, n_2, n_3}_{n'_1, n'_2, n'_3}  \,=\,
{\mathscr R}^{n'_3, n'_2, n'_1}_{n_3, n_2, n_1} .
\end{equation*}

\section*{Acknowledgements}
The authors thank
Vladimir Bazhanov and Vladimir Mangazeev for kind interest and 
encouragement.
A.K. thanks Masato Okado and Nobuyuki Furuyama
for communications. He also thanks
warm hospitality at Department of
Theoretical Physics, Australian National University
where a part of this work was done.
This work is supported by Grants-in-Aid for
Scientific Research No.~21540209 from JSPS
and partially by ARC.

\end{document}